\documentclass[11pt]{article}
\usepackage{mydef2col}
\usepackage{markArticle}

\title{A Generalized Langevin Model of Latent Liquidity and Concave Price Impact}

\shorttitle{GLE Model of Latent Liquidity}

\author{
    \authorstyle{ Andrey Itkin}
    \newline \newline
    \institution{FRE department, Tandon School of Engineering, New York University, email: \url{aitkin@nyu.edu}} \\
}

\date{\today}

\begin{document}

\maketitle

\lettrineabstract{We model market impact as the response to submitted order flow net of counterflow from latent traders, activated when price displacements from the level that would prevail without the order exceed individual thresholds. Order flow depletes this pool, and a generalized Langevin equation governs its recovery over several time scales. Its memory kernels are finite sums of exponentials, so its Markovian lift is exact rather than an approximation. For an undepleted pool, aggregation under explicit assumptions on individual trading responses yields an intermediate square-root regime between linear small- and large-order limits, without imposing a square-root impact law. Scaling thresholds and responses with price noise makes impact in this regime proportional to volatility, and thresholds that grow with the execution horizon make it independent of duration. With constant displayed depth, expected round-trip costs are nonnegative under the log-price convention, independently of the memory. Numerical experiments show that depletion narrows the square-root range and that memory spectra producing similar single-order impacts can respond differently after substantial prior trading. Calibration to market data is left to a companion paper.
}


\section{Introduction} \label{sec:intro}

Market impact is the displacement of an asset price caused by the execution of an order. It is a major cost of trading large positions and a basic measure of how markets absorb demand, and it has therefore drawn attention from practitioners and theorists alike. The empirical regularity that any model must confront is that impact is concave in the size of the order. The displacement produced by a metaorder, an order split into many child trades and worked over minutes to days, grows more slowly than the size, and over a wide range it grows approximately as the square root of the size. This square-root law has been reported across markets, asset classes, and periods and is among the most robust stylized facts of the market-microstructure literature, \cite{AlmgrenEtAl2005, BershovaRakhlin2013, TothEtAl2011}.

The simplest theory of impact is linear. In the model of \cite{Kyle1985}, an informed trader operates against competitive market makers, and the price moves in proportion to signed order flow, with a constant measuring market depth. Linear impact is the natural benchmark and underlies much of the optimal-execution literature, in which a trading schedule is chosen to balance impact cost against timing risk, \cite{AlmgrenChriss2001}. It does not, however, reproduce the observed concavity. Under a time-invariant linear response, the impact of a metaorder is proportional to its size, so the concave behavior seen in the data must arise from a nonlinearity or an adaptation absent from the static linear model.
For the relation between price changes and total traded volume over fixed intervals, \cite{ContKukanovStoikov2014} show that aggregation alone can produce apparent concavity under a linear response to order flow imbalance.

One influential account describes impact as the response of the price to a history of trades. In propagator models, price is a convolution of past order flow with a decaying kernel, so that the contribution of each trade relaxes over time,  \cite{BouchaudEtAl2004}. The decay of the kernel is constrained by no-arbitrage arguments together with the autocorrelation of order flow, \cite{Gatheral2010}. Propagator models capture the transient nature of impact and its relaxation and fit the data well, but the concavity is represented through the response function rather than derived from a mechanism that generates the square-root law. A related strand models order flow itself as a self-exciting (Hawkes) process, reproducing its clustering and long memory, \cite{BacryMastromatteoMuzy2015}. In that setting, no-arbitrage combined with long-memory order flow implies power-law impact and rough volatility \cite{JusselinRosenbaum2020}. In limit-order-book models of optimal execution, the book is consumed by trading and replenished through resilience, \cite{ObizhaevaWang2013,AlfonsiFruthSchied2010}. A viable impact model should not admit price manipulation, that is, a round trip with negative expected cost, \cite{HubermanStanzl2004,AlfonsiSchiedSlynko2012}. Time-varying depth is a known source of such effects, \cite{FruthEtAl2014,AlfonsiAcevedo2014}.

A second line of work locates the concavity in the structure of latent liquidity. Most of the volume available to a large order is not displayed but latent, revealed only as the price moves. If the density of this latent liquidity varies with distance from the current price, a metaorder encounters progressively more opposing liquidity and its impact becomes concave. This class of models provides a mechanism for the square-root impact law and has been studied in latent-order-book and agent-based frameworks, \cite{TothEtAl2011, MastromatteoTothBouchaud2014, DonierEtAl2015}. \cite{BenzaquenBouchaud2017} formulate the dynamics of latent liquidity as a continuous reaction--diffusion process in which liquidity is consumed and replenished, and show how finite cancellation and deposition rates introduce liquidity time scales. Their fractional extension represents a broad distribution of agent time scales and produces both concave impact and slow relaxation of the impact kernel, \cite{BenzaquenBouchaud2018}. These results motivate a more general memory-based description in which liquidity relaxation is not restricted to a single time scale.

The latent-liquidity account can also be read in terms of the flow that a displacement elicits, \cite{Isichenko2021}. A persistent metaorder is detected by other participants, who trade against it, so that the flow the market absorbs is the submitted order net of an induced opposing flow. Latent orders become active only when the price has moved far enough, and the opposing flow is the part of latent liquidity that the displacement has revealed. Concavity then arises when a larger displacement reveals a more than proportionally larger opposing flow, so that the price moves with a shrinking fraction of the order. In this reading the latent order book and the induced counterflow describe one mechanism, and this paper models them as one.

To represent liquidity dynamics over multiple time scales, we use the generalized Langevin equation (GLE), which provides a natural framework for history-dependent dynamics through a memory kernel, \cite{Mori1965,Zwanzig2001,ItkinGLE1}. 
Langevin equations were applied early to price dynamics by \cite{BouchaudCont1998}, with memory kernels in the drift and white noise. In the GLE used here, the memory kernel also fixes the covariance of the colored noise through the fluctuation-dissipation relation. When the kernel is a finite sum of exponentials, the non-local dynamics admit an exact finite-dimensional Markovian lift. The memory is then represented by a set of auxiliary states, making the model straightforward to simulate and calibrate. This construction is well suited to latent liquidity whose response and relaxation extend over multiple time scales.

The contribution of this paper is a dynamical formulation connecting concave market impact to the depletion and recovery of latent liquidity. Threshold-activated counterflow generates an intermediate square-root regime under explicit assumptions, while generalized Langevin dynamics govern the available pool and its dependence on earlier trading. This construction links the impact prefactor, the extent of the concave regime, and post-execution recovery within an exact finite-dimensional Markovian representation, and yields nonnegative round-trip costs under the stated log-price convention. Numerical comparisons further show that memory specifications producing similar isolated-order impacts can generate substantially different responses to a subsequent order after stronger prior trading. These results identify conditional impact as an additional diagnostic of liquidity dynamics beyond the order-size scaling relation.

The numerical study compares four variants: the linear benchmark, a fresh pool that does not deplete, a depleting pool with a single exponential memory mode, and the baseline GLE pool with several modes, to test whether an extended concave impact regime, including an approximately square-root interval, emerges without prescribing a square-root dependence of impact on order size. Impact curves and their local exponents distinguish the effects of counterflow, depletion, and the memory spectrum, while schedule and inherited-state comparisons examine the response to trading history. The calculations combine deterministic benchmarks and Monte Carlo simulation of the lifted dynamics. They illustrate the analytical predictions under the stated specifications.

All reported stochastic results use Monte Carlo simulation of the lifted dynamics. \Cref{app:pinn} records a backward formulation of the same response functionals and a physics-informed neural network (PINN) approximation, verified against Monte Carlo at the baseline, as a starting point for a companion study. That study is reserved for further numerical development, empirical calibration, and validation.

The rest of the paper is organized as follows. \Cref{sec:model} introduces the price equation, the latent dynamics and their Markovian lift, the counterflow of latent counterparties, and the round-trip cost identity. \Cref{sec:benchmark} derives the small-order expansion. \Cref{sec:cf-impact} derives the impact law with a fresh pool, its duration-free variants, and pathwise bounds for a stochastic pool. \Cref{sec:nonlinear,sec:numerical-method} describe the experimental design, numerical methods, and verification scope. \Cref{sec:experiments} reports size--duration, schedule, and inherited-state comparisons and discusses their limitations. \Cref{sec:conclusions} discusses the conclusions, the limitations, and the empirical follow-up. The appendices contain the proofs and the backward PINN formulation.

\section{The model} \label{sec:model}

We now formulate the model. The market is represented by a constant displayed depth and by a latent liquidity state whose dynamics retain the history of past order flow through a GLE. The order moves the price against the depth. The resulting displacement relative to a counterfactual reference reveals latent counterparties who trade against it. The latent state sets the size of the pool from which these counterparties are drawn. We first define the price and the displacement, then introduce the latent dynamics, the counterflow, and the latent pool, and finally show that round-trip costs are nonnegative under the stated log-price convention.

\subsection{The price equation} \label{sec:gle}

Let $S_t$ denote the asset mid-price at time $t\ge 0$ and $X_t = \log S_t$ the log mid-price. Order flow is described by a signed trading rate $q_t$, with $q_t > 0$ for net buying. A metaorder is executed over a horizon $[0,T]$ with total signed volume
\begin{equation} \label{eq:Q}
Q = \int_0^T q_s\,ds.
\end{equation}

Alongside the price we introduce a reference log-price $m_t$, defined as the counterfactual log-price the same asset would follow in the absence of the metaorder, and the displacement the order induces relative to it,
\begin{equation} \label{eq:mispricing}
D_t = X_t - m_t.
\end{equation}
The displacement is measured relative to the counterfactual process $m_t$, not to a fundamental value. In the balanced-market specification considered first, we assume that the counterfactual log-price has no predictable drift under the physical measure $\mathbb P$ and carries the same diffusion as the observed price,
\begin{equation} \label{eq:reference}
dm_t = \sigma _t\,dW_t^X,
\end{equation}
where the ambient volatility $\sigma _t > 0$ and the Brownian motion $W^X$ are those of the price equation \eqref{eq:price} below. The ambient volatility is an adapted stochastic process whose dynamics do not depend on the metaorder. It may follow any stochastic-volatility model, including one with long memory such as the volatility GLE of \cite{ItkinKazbekGLE}, and it may be correlated with the latent-liquidity noise. The reference therefore removes from the price only the drift generated by the order, and the displacement $D_t$ evolves without a diffusion term.

This choice is required for consistency. In the absence of an order the price and its counterfactual must coincide, so ordinary price fluctuations cannot create a displacement to which counterparties respond. A fixed reference, or a reference driven by independent noise, violates this condition and makes the counterflow trade against ordinary volatility. A mean-reverting force toward a fundamental price, as in the fundamentalist term of \cite{BouchaudCont1998}, responds to every price deviation, including those caused by ordinary noise. Such a counterflow would also be economically implausible for two reasons. It would make prices mean-revert predictably in the absence of any metaorder, whereas short-horizon returns are close to uncorrelated. It would also lead counterparties to trade against price changes that carry information. The model assumes instead that counterparties respond to the footprint of the order. They observe the price rather than the footprint itself, and this idealization is discussed in \cref{sec:counterflow}.

The only restriction on volatility is that it is not caused by the order. If the order affects volatility, for instance through the latent state, the construction extends as follows. Let $Y_t^{[0]}$ denote the latent state that would prevail without the order, driven by the same memory noise with $q\equiv 0$, and let the price volatility be a function $\sigma(Y_t)$ of the latent state. The reference is then driven by $\sigma(Y_t^{[0]})$, and the displacement acquires the noise term $\big[ \sigma (Y_t) - \sigma (Y_t^{[0]})\big]\,dW_t^X$. This term is caused by the order and vanishes without it. The extension duplicates the latent state and its memory modes, and it is not pursued in the numerical study.
\Cref{sec:robustness} comments on the treatment of a reference drift.

The state of the market on which trading acts is represented by a scalar latent variable $Y_t$, whose dynamics are governed by the GLE introduced in \cref{sec:lift}. The price responds to order flow through the drift
\begin{equation} \label{eq:price}
dX_t = \frac{q_t - q_t^{\mathrm{cf}}}{L_0}\,dt + \sigma _t\,dW_t^X,
\end{equation}
where $L_0 > 0$ is the displayed depth, $q_t^{\mathrm{cf}}$ is an endogenous counterflow, $\sigma _t > 0$ is the ambient volatility of \eqref{eq:reference}, and $W^X$ is a standard Brownian motion under the measure $\PP$. The depth is the volume that moves the log-price by one unit in the absence of counterflow, as in \cite{Kyle1985}. It is held constant. Empirically, short-interval price changes are linear in the net order flow at the best quotes, with a slope inversely proportional to depth, although depth itself varies intraday \cite{ContKukanovStoikov2014}. \Cref{sec:conclusions} discusses a state-dependent depth.

Price impact arises through a single mechanism, the net flow $q_t - q_t^{\mathrm{cf}}$. A sufficiently persistent order displaces the price, and the displacement reveals latent counterparties who trade against it. The submitted order $q_t$ is partially offset by the counterflow $q_t^{\mathrm{cf}}$, and the price responds to $q_t^{\mathrm{net}} = q_t - q_t^{\mathrm{cf}}$. The pool of latent counterparties is not fixed. Order flow depletes it, and it refills over several time scales. This memory is carried by the latent state $Y_t$, which modulates the counterflow. The qualitative picture is the latent-liquidity one of \cite{BenzaquenBouchaud2018,BenzaquenBouchaud2017}, in which liquidity is consumed by trading and replenished across a broad range of time scales. The realization is different, since the dynamics here are those of the GLE rather than of those models.

The mechanism does not impose a power law on impact. The counterflow of a fresh pool is derived in \cref{sec:counterflow} from the behavior of opposing traders, and the latent pool is specified in \cref{sec:pool}. An approximately square-root law follows from this derivation at intermediate order sizes, as shown in \cref{sec:cf-impact}.

Under a deterministic metaorder and mean-zero noise, the expected impact at time $t$ is
\begin{equation} \label{eq:impact-t}
I(t) = \EE_\PP \left[X_t - X_0\mid q_{[0,t]}\right] = \EE_\PP \left[\int_0^t \frac{q_s - q_s^{\mathrm{cf}}}{L_0}\,ds\right].
\end{equation}

The terminal impact of a metaorder is a functional of the execution schedule,
\begin{equation} \label{eq:impact-QT}
I[q_{[0,T]}] = \EE_\PP\!\left[X_T - X_0 \,\middle|\, q_{[0,T]}\right],
\end{equation}
and depends on the shape of $q_{[0,T]}$, not on the executed size $Q = \int_0^T q_s\,ds$ alone. Two schedules of equal size and duration but different profiles need not give the same impact, since the counterflow depends on the accumulated displacement and on the latent state, both of which carry the history of the order.

The model therefore defines a chain from order flow to impact,
\begin{equation} \label{eq:chain}
q_t\;\longrightarrow\;\big(Y_t,\;D_t\big)\;\longrightarrow\;q_t^{\mathrm{cf}}\; \longrightarrow\; \frac{q_t - q_t^{\mathrm{cf}}}{L_0}\; \longrightarrow\;X_t\;\longrightarrow\;I[q_{[0,T]}],
\end{equation}
in which the latent state and the displacement jointly determine the counterflow that the order meets.

\subsection{Latent-liquidity dynamics and the Markovian lift} \label{sec:lift}

The latent state $Y_t$ is a modeling device rather than a directly observed quantity. Its memory structure is motivated by persistent signed order flow and slowly relaxing impact, \cite{LilloFarmer2004, BouchaudFarmerLillo2009, BouchaudEtAl2004, Gatheral2010}. These observations do not uniquely imply a GLE: a Markovian mean-reverting state already has a finite relaxation time. The present specification allows several relaxation scales and an explicit dependence on past forcing through separate intrinsic and order-flow kernels. A finite spectrum can describe persistence over a prescribed range of horizons without asserting asymptotic long memory. Whether this additional structure improves the description of liquidity dynamics remains a question for numerical comparison and empirical testing.

The dynamics follow an overdamped GLE \cite{Zwanzig2001,Mori1965}. In this construction the present model inherits the finite-exponential Markovian-lift technique developed for the volatility GLE in \cite{ItkinGLE1}, applying it here to latent liquidity rather than volatility.
\begin{equation} \label{eq:gle}
\dot Y_t = - U'(Y_t) - \int_0^t K_{YY}(t - s)\,dY_s + \int_0^t K_{YX}(t - s; Y_s)\, q_s\,ds + \xi _t,
\end{equation}
in which $U$ is a confining potential and $- U'$ is the restoring force. The kernel $K_{YY}$ acts on increments of the latent state and therefore represents retarded friction, while $K_{YX}$ is the cross-memory kernel through which order flow perturbs the state. For paths of finite variation, the intrinsic-memory term may equivalently be written as $\int_0^t K_{YY}(t - s)\dot Y_s\,ds$.

The two kernels need not share a common time scale. The random force $\xi _t$ is a zero-mean stationary Gaussian process whose autocorrelation is fixed by the memory kernel through the second fluctuation-dissipation relation (FDT),
\begin{equation} \label{eq:fdt}
\EE[ \xi _t\, \xi _s] = \sigma _Y^{2}\,K_{YY}(|t - s|),\qquad \sigma _Y\ge 0 .
\end{equation}
The constant $\sigma _Y^{2}$ couples the random force to the retarded friction and sets the fluctuation scale. Relation \eqref{eq:fdt} is imposed on the retarded kernel only. The unit coefficient of $\dot Y_t$ in \eqref{eq:gle} acts as an instantaneous friction without a matching white-noise force. Consequently $\sigma _Y^{2}$ is not the temperature of an equilibrium law, even for a quadratic potential. For a quadratic potential $U(y) = \tfrac12 u_2 y^2$, the linearized stationary variance of $Y_t$ differs from the Gibbs value $\sigma_Y^2/u_2$. Adding the force $\sqrt 2\,\sigma _Y\,dW_t^Y$, driven by an additional Brownian motion $W^Y$, to the latent equation would restore the Gibbs law, at the cost of a nonsmooth latent state. This is not pursued. Relation \eqref{eq:fdt} alone therefore does not determine a particular invariant density. Existence and form of a stationary law remain properties of the complete dynamics. The coupling nevertheless distinguishes \eqref{eq:gle} from a Volterra equation with independently specified noise.

The shared relaxation spectrum permits matching exponential representations of the retarded friction and the colored force. We specify the kernels as finite sums,
\begin{align} \label{eq:Kyy}
K_{YY}(t) &= \sum_{i = 1}^{N} a_i\,e^{ - \gamma _i t}, \qquad a_i, \gamma _i > 0, \\
K_{YX}(t;y) &= \sum_{j = 1}^{M} c_j(y)\,e^{ - \lambda _j t}, \qquad \lambda _j > 0, \label{eq:Kyx}
\end{align}

with $N$ internal modes and $M$ order-flow modes, the latter carrying state-dependent amplitudes $c_j(y)$. In the present model these finite spectra are structural specifications, so the lift is exact. Define stationary OU modes
\begin{equation} \label{eq:etastate}
\eta _{i,t} = \sigma _{Y,i}\int_{ - \infty}^{t}e^{ - \gamma _i(t - s)}\,dW_{i,s}, \qquad \sigma _{Y,i} = \sigma _Y\sqrt{\frac{2 \gamma _i}{a_i}},
\end{equation}
and combine each mode with its friction history,
\begin{equation} \label{eq:hstate}
h_{i,t} = \int_0^t e^{ - \gamma _i(t - s)}\,dY_s - \eta _{i,t}.
\end{equation}
The order-flow memory states are
\begin{equation} \label{eq:gstate}
g_{j,t} = \int_0^t e^{ - \lambda _j(t - s)}\,c_j(Y_s)\,q_s\,ds,
\end{equation}
where the $W_i$ are independent standard Brownian motions. Then $\xi _t = \sum_{i = 1}^N a_i \eta _{i,t}$ satisfies \eqref{eq:fdt} exactly, including at finite times, because the OU modes are initialized in their stationary laws. The lifted states obey
\begin{equation} \label{eq:hg-ode}
dh_{i,t} = - \gamma _i h_{i,t}\,dt + dY_t - \sigma _{Y,i}\,dW_{i,t},\qquad dg_{j,t} = \big[ - \lambda _j g_{j,t} + c_j(Y_t)\,q_t\big]dt,
\end{equation}
and close the latent equation on a finite state without direct white noise on $Y_t$,
\begin{equation} \label{eq:gle-lift}
dY_t = \Big[ - U'(Y_t) - \sum_{i = 1}^{N} a_i h_{i,t} + \sum_{j = 1}^{M} g_{j,t}\Big]dt.
\end{equation}

Thus, $h_{i,t}$ combines the exponentially weighted increments of $Y$ with the corresponding stationary OU force. Although \eqref{eq:hg-ode} contains $dY_t$, the system is explicit after substituting \eqref{eq:gle-lift}. No time history needs to be retained.

In turn, $g_{j,t}$ is purely deterministic conditional on the path of the latent state $Y_t$ and the exogenous order flow $q_t$. It has no independent diffusion term. It acts strictly as an exponentially decaying filter (a leaky integrator) applied to the incoming order flow, with its sensitivity modulated by $c_j(Y_t)$.

This decomposition separates the intrinsic friction and its FDT-consistent fluctuations, represented by the $h$ states, from the response to observed order flow, represented by the conditionally deterministic $g$ states.

Collecting the state as
\begin{equation} \label{eq:Zdef}
\mathbf Z_t = \big(Y_t,h_{1,t},\dots,h_{N,t},g_{1,t},\dots,g_{M,t}\big)^{ \top}\in\RR^{1 + N + M},
\end{equation}
the latent dynamics constitute a finite Markovian system driven by the metaorder, and the lift is exact for the finite kernels.

\begin{proposition}[Exactness of the lift]\label{prop:lift}
For the finite kernels \eqref{eq:Kyy} and \eqref{eq:Kyx}, let $\eta _{i,0}\sim N(0, \sigma _{Y,i}^2/(2 \gamma _i))$ independently, set $h_{i,0} = - \eta _{i,0}$ and $g_{j,0} = 0$, and evolve \eqref{eq:hg-ode}--\eqref{eq:gle-lift}. Then the lifted system reproduces \eqref{eq:gle}, with a stationary Gaussian force satisfying \eqref{eq:fdt} for all $s,t\ge0$.
\end{proposition}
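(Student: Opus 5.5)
We argue in two directions: from the lifted system back to the GLE \eqref{eq:gle}, and from the OU modes to the covariance \eqref{eq:fdt}. Since the initial conditions make the lifted states agree with the integral definitions \eqref{eq:hstate}--\eqref{eq:gstate}, the proof reduces to verifying that these definitions solve \eqref{eq:hg-ode}. Substituting them into \eqref{eq:gle-lift} then recovers the drift of \eqref{eq:gle}.

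\textbf{Step 1 (the order-flow states).} For $g_j$ we apply variation of constants. The linear ODE in \eqref{eq:hg-ode} with $g_{j,0}=0$ has the unique solution \eqref{eq:gstate}. Summing over $j$ and using \eqref{eq:Kyx} gives $\sum_j g_{j,t}=\int_0^t K_{YX}(t-s;Y_s)q_s\,ds$. A continuous, locally bounded $c_j$ together with integrable $q$ suffices for this.

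\textbf{Step 2 (the friction states).} Define $\tilde h_{i,t}=\int_0^t e^{-\gamma_i(t-s)}dY_s-\eta_{i,t}$, where $\eta_i$ is the OU process started from $\eta_{i,0}$ and driven by $d\eta_{i,t}=-\gamma_i\eta_{i,t}dt+\sigma_{Y,i}dW_{i,t}$. Because $Y$ is of finite variation (it is $C^1$ from \eqref{eq:gle-lift}), the integration-by-parts formula $d\big(e^{-\gamma_i t}\int_0^t e^{\gamma_i s}dY_s\big)=-\gamma_i(\cdot)dt+dY_t$ holds without Itô corrections. Combining this with the OU equation shows that $\tilde h_i$ satisfies the first equation in \eqref{eq:hg-ode} and has initial value $-\eta_{i,0}=h_{i,0}$. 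This equation is linear in $h_i$ with additive driving terms, so pathwise uniqueness gives $h_i=\tilde h_i$. It follows that
\[
-\sum_i a_ih_{i,t}=-\int_0^t K_{YY}(t-s)\,dY_s+\sum_i a_i\eta_{i,t},
\]
and with Step 1 this turns \eqref{eq:gle-lift} into \eqref{eq:gle} with $\xi_t=\sum_i a_i\eta_{i,t}$.

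There is a circularity here, since $Y$ enters the definition of $\tilde h$. The right way to phrase the argument is that the lifted SDE system has a unique strong solution when $U'$ and $c_j$ are locally Lipschitz, up to explosion, or globally under the paper's confining assumptions. For that solution, the identities above hold pathwise.

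\textbf{Step 3 (the covariance identity).} Since $\eta_{i,0}\sim N(0,\sigma_{Y,i}^2/(2\gamma_i))$ is the invariant law of the OU process, each $\eta_i$ is a stationary Gaussian process with $\EE[\eta_{i,t}\eta_{i,s}]=\frac{\sigma_{Y,i}^2}{2\gamma_i}e^{-\gamma_i|t-s|}=\frac{\sigma_Y^2}{a_i}e^{-\gamma_i|t-s|}$. The last equality uses \eqref{eq:etastate}. Here we need $\eta_{i,0}$ and $W_i$ to be independent across $i$ and independent of each other. Then
\[
\EE[\xi_t\xi_s]=\sum_i a_i^2\,\frac{\sigma_Y^2}{a_i}\,e^{-\gamma_i|t-s|}=\sigma_Y^2K_{YY}(|t-s|)
\]
for all $s,t\ge0$, which is \eqref{eq:fdt}.

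The main obstacle is Step 2, specifically the well-posedness argument that removes the circular dependence on $Y$. The computations themselves are routine once one notes that $Y$ has no martingale part.
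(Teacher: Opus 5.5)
Your proof is correct and follows the paper's argument: the paper likewise differentiates \eqref{eq:hstate} and \eqref{eq:gstate} to obtain \eqref{eq:hg-ode}, sums the modes to recover the two memory integrals of \eqref{eq:gle} with $\xi_t=\sum_i a_i\eta_{i,t}$, and derives \eqref{eq:fdt} from the stationary OU covariance, independence of the modes, and $\sigma_{Y,i}^2 = 2\gamma_i\sigma_Y^2/a_i$. The only difference is that you spell out the uniqueness step for the linear $h_i$-equation, which identifies the solution of the lifted system with the integral representation; the paper leaves this step implicit.
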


\begin{proof}
See \cref{sec:lift-appendix}.
\end{proof}

The finite spectrum is a structural modeling choice. A positive exponential sum is bounded at zero and decays exponentially at long times, whereas a power-law kernel $K(t) \sim t^{ - \beta }$, with $0 < \beta < 1$, is singular at zero and has an algebraic tail. Such a power law can be approximated by positive exponential sums on a prescribed interval bounded away from zero, but the accuracy depends on the selected rates, weights, and interval, see \cite{ItkinGLELSV}. Here the relaxation modes are specified directly, so the lift is exact for the chosen model and is not a truncation of an assumed power law. Persistence across a finite range of horizons therefore does not assert asymptotic long memory.

This choice differs from the structural short- and long-time kernel scaling in the volatility GLE of \cite{ItkinKazbekGLE}. Both formulations use intrinsic memory acting on increments of the latent state and colored noise coupled to that kernel through \eqref{eq:fdt}. They differ in the specification of the relaxation spectrum.

The potential $U$ may be quadratic, or nonlinear where replenishment is asymmetric or state dependent,
\begin{equation} \label{eq:quartic}
U(y) = \tfrac{u_2}{2}y^2 + \tfrac{u_3}{3}y^3 + \tfrac{u_4}{4}y^4,\qquad u_2 > 0,\quad u_4 > 0,
\end{equation}
the cubic coefficient $u_3$ introducing an asymmetry between buy and sell displacements and the quartic coefficient $u_4$ providing confinement. The latent state is measured from the local minimum of $U$ at zero.

\subsection{Latent counterparties and the counterflow} \label{sec:counterflow}

The counterflow responds to the displacement $D_t = X_t - m_t$ relative to the reference price. It can persist after execution ends and therefore contribute to price reversal, as examined in \cref{sec:relaxation}. The interpretation of a moving reference requires a consistent treatment of background drift in both price processes. \Cref{sec:robustness} discusses this qualification. We first construct the counterflow of a fresh pool, whose intensity does not depend on past order flow, and introduce depletion of the pool in \cref{sec:pool}.

The counterflow of a fresh pool is a signed function of the displacement,
\begin{equation} \label{eq:qcf-fresh}
q^{\mathrm{cf}} = \mathcal A(D),\qquad \operatorname{sgn}\mathcal A(D) = \operatorname{sgn}D,
\end{equation}
with $\mathcal A$ odd and $|\mathcal A|$ nondecreasing in $|D|$. Its sign is that of the displacement rather than that of the order, so that a buy metaorder, which raises the price above its reference, produces a positive counterflow that is subtracted from the order. When displacement has the order's sign, counterflow reduces the net rate $q_t^{\mathrm{net}} = q_t - q_t^{\mathrm{cf}}$. The extent of this reduction depends on the execution duration and response parameters.

We obtain the aggregate function $\mathcal A$ by integrating an assumed individual response rule over the counterparties. Consider a population of potential counterparties. A counterparty acts only when the magnitude of the displacement exceeds an individual threshold $\chi \ge 0$. The threshold reflects the cost of detecting the order and trading against it, including transaction costs, risk limits, and uncertainty about whether the displacement is transient. Counterparties observe the price rather than the displacement itself. They must distinguish the footprint of the order from ordinary price fluctuations, and the resulting uncertainty is a natural source of positive thresholds. Once active, a counterparty trades against the displacement at a rate proportional to the excess $|D_t| - \chi$.
Transaction-cost models motivate the presence of a no-trade region, \cite{Constantinides1986,DavisNorman1990}. They do not by themselves imply a finite trading rate linear in excess displacement: in particular, the Davis--Norman formulation uses singular trading controls at the boundaries of a no-transaction region. The linear rate adopted here is a separate behavioral assumption. Until its threshold is crossed, a counterparty's volume is not displayed. The population is therefore a model of latent liquidity, and the counterflow is the part of it that the displacement has revealed.

Let $\Pi$ be a nonnegative measure on $[0,\infty)$ that describes the distribution of thresholds, weighted by the proportionality constants of the individual trading rates. Its mass is measured in units of volume per unit time per unit of log-price. Aggregating over the population gives
\begin{equation} \label{eq:cf-aggregate}
\mathcal A(D) = \operatorname{sgn}(D)\int_{[0,|D|]}\big(|D| - \chi \big)\, \Pi (d \chi ).
\end{equation}
We write $\Pi (d \chi ) = \pi _0\, \delta _0(d \chi ) + \pi ( \chi )\,d \chi$, where $\delta _0$ is the unit mass at zero. The atom $\pi _0\ge 0$ represents counterparties who act at any nonzero displacement, such as liquidity providers, and the density $\pi \ge 0$ describes counterparties with positive thresholds. Let $| \Pi | = \Pi ([0,\infty))$ denote the total response intensity and $\bar \chi = | \Pi |^{ - 1}\int_{[0,\infty)} \chi \, \Pi (d \chi )$ the mean threshold.

\begin{proposition}[Structure of the counterflow]\label{prop:cf-structure}
Let $\Pi$ have finite mass $| \Pi | > 0$ and finite first moment. Then the following hold.
\begin{enumerate}
\item The function $\mathcal A$ is odd, nondecreasing, and convex on $[0,\infty)$. Its right derivative is $\mathcal A'(D) = \Pi ([0,D])$ for $D\ge 0$.
\item If $\pi _0 = 0$ and $\pi$ is continuous at zero with $\pi (0) > 0$, then $\mathcal A(D) = \tfrac12 \pi (0)\,D|D| + o(D^2)$ as $D\to 0$.
\item As $|D|\to\infty$, $\mathcal A(D) = | \Pi |\operatorname{sgn}(D)\big(|D| - \bar \chi \big) + o(1)$.
\end{enumerate}
\end{proposition}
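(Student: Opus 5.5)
The plan is to rewrite $\mathcal A$ on $[0,\infty)$ as an integral of the cumulative distribution $F(x) = \Pi([0,x])$ and read off all three claims from that representation. For $D\ge 0$, the identity $D-\chi = \int_\chi^D dx$ and Tonelli's theorem (the integrand is nonnegative) give
\begin{equation*}
\mathcal A(D) = \int_{[0,D]}\int_{\chi}^{D} dx\,\Pi(d\chi) = \int_0^D \Pi([0,x])\,dx = \int_0^D F(x)\,dx .
\end{equation*}
Oddness is immediate from the factor $\operatorname{sgn}(D)$ in \eqref{eq:cf-aggregate}, together with $\mathcal A(0)=0$. On $[0,\infty)$, $F$ is nonnegative, nondecreasing and right-continuous. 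Hence $\mathcal A$ is nondecreasing and convex there, being the integral of a nondecreasing function. The right derivative at $D$ is $\lim_{h\downarrow0} h^{-1}\int_D^{D+h}F = F(D^+) = F(D)$ by right-continuity, which is part 1. Monotonicity on all of $\mathbb R$ then follows from oddness.

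For part 2, with $\pi_0=0$ the distribution function is $F(x)=\int_0^x \pi(\chi)\,d\chi$. Continuity of $\pi$ at $0$ gives $F(x) = \pi(0)x + o(x)$. Integrating this over $[0,D]$ yields $\mathcal A(D) = \tfrac12\pi(0)D^2 + o(D^2)$ for $D>0$, and oddness turns $D^2$ into $D|D|$ for $D<0$.

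For part 3, take $D\ge0$ and write
\begin{equation*}
\mathcal A(D) - |\Pi|(D-\bar\chi) = \int_{[0,\infty)} \big[(D-\chi)^+ - (D-\chi)\big]\,\Pi(d\chi) = \int_{(D,\infty)} (\chi - D)\,\Pi(d\chi).
\end{equation*}
This uses $|\Pi|\bar\chi = \int\chi\,\Pi(d\chi)$, which is finite by the moment hypothesis. The remainder is nonnegative and bounded by $\int_{(D,\infty)}\chi\,\Pi(d\chi)$, the tail of a convergent integral, so it tends to $0$ as $D\to\infty$ by dominated convergence. The case $D\to-\infty$ follows from oddness, giving the stated $\operatorname{sgn}(D)$ form.

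No step is a real obstacle. The points needing care are the atom at zero, which is why the closed interval $[0,D]$ and the right derivative are used, and justifying the splitting in part 3. That splitting is valid only because both $|\Pi|$ and the first moment are finite, so the two integrals can be separated without an $\infty-\infty$ issue.
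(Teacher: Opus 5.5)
Your proof is correct. Part 3 coincides with the paper's argument: the same splitting against $|\Pi|(D-\bar\chi)$ and the same tail bound $\int_{(D,\infty)}\chi\,\Pi(d\chi)$. For parts 1 and 2 your route is different.

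The paper keeps the hinge form $\mathcal A(D)=\int_{[0,\infty)}(D-\chi)_+\,\Pi(d\chi)$. It notes that each hinge is nondecreasing and convex in $D$ with right derivative $\ind_{\{\chi\le D\}}$. It then passes that derivative through the integral by dominated convergence, using the finite mass, and gets convexity from the monotone right derivative. It handles part 2 directly on the density, writing the remainder as $\int_0^D(D-\chi)\big(\pi(\chi)-\pi(0)\big)\,d\chi$.

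You instead apply Tonelli once to get $\mathcal A(D)=\int_0^D\Pi([0,x])\,dx$. You then read off monotonicity, convexity and the right derivative from the nonnegativity, monotonicity and right-continuity of a distribution function. Part 2 reduces to $F(x)=\pi(0)x+o(x)$.

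What your identity buys:
\begin{itemize}
\item It needs only nonnegativity to justify.
\item It makes the role of the closed interval and the atom at zero transparent.
\item It shows that parts 1 and 2 use only $\Pi([0,D])<\infty$; finite total mass and a finite first moment are needed only in part 3.
\end{itemize}

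What the paper's route buys: it works directly with the individual hinge responses and their right derivatives, without the Fubini rewrite. That form transfers most readily to other convex nondecreasing individual response rules.

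Made explicit, your part 2 gives the same remainder bound as the paper's, $\tfrac12 D^2\sup_{0\le\chi\le D}|\pi(\chi)-\pi(0)|$.
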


\begin{proof}
See \cref{sec:cf-proofs}.
\end{proof}

Part (ii) is the central structural property. For the assumed linear individual response, the leading quadratic term depends only on the threshold density at zero. There must be no atom at zero, and the density must be continuous and positive there. Counterparties may have arbitrarily small positive thresholds. What is excluded is a positive response mass at exactly zero. Part (iii) states that finite total response intensity and a finite first threshold moment produce asymptotically linear counterflow. An atom at zero adds the linear term $\pi _0 D$, which dominates the quadratic term for $|D| < 2 \pi _0/ \pi (0)$. The convexity in part (i) distinguishes this construction from a saturating response such as $q_\star\tanh(D_t/d_c)$, which is concave on $[0,\infty)$ and cannot be written in the form \eqref{eq:cf-aggregate}. The consequences for impact are derived in \cref{sec:cf-impact}.

\begin{myremark}[Other onset exponents]\label{rem:onset}
Part (ii) rests on two assumptions. The threshold density is positive at zero, and each counterparty trades at a rate linear in its excess. Both can be relaxed. Suppose that a counterparty with threshold $\chi$ trades at a rate proportional to $\big((|D| - \chi )_ + \big)^n$ with $n > 0$, where $x_ + = \max(x,0)$. Suppose also that $\pi _0 = 0$ and $\pi ( \chi ) = \pi _k \chi ^k\big(1 + o(1)\big)$ as $\chi \to0$, with $k > - 1$ and $\pi _k > 0$. Rescaling the threshold by $|D|$ in \eqref{eq:cf-aggregate} gives
\begin{equation} \label{eq:cf-onset-general}
\mathcal A(D) = \pi _k\,\mathrm{B}(k + 1,n + 1)\,\operatorname{sgn}(D)\,|D|^{p}\big(1 + o(1)\big),\qquad p = k + n + 1,\qquad D\to0,
\end{equation}
where $\mathrm{B}$ is the beta function. Part (ii) is the case $k = 0$, $n = 1$, $p = 2$. An intermediate concave impact law requires $p = k + n + 1 > 1$, which does not follow from $k > - 1$ and $n > 0$ alone. For the idealized global response $\mathcal A(D) = \omega \operatorname{sgn}(D)|D|^p$ with $\omega > 0$ and $p > 1$, the equilibrium under a positive constant rate $q$ is $D_\infty = (q/ \omega )^{1/p}$. With a fresh pool, \eqref{eq:price} and \eqref{eq:reference} give $L_0\dot D_t = q - \mathcal A(D_t)$, and rescaling this equation gives the time scale
\begin{equation} \label{eq:general-onset-time}
t_q = L_0 \omega ^{ - 1/p} q^{1/p - 1}, \qquad q = Q/T.
\end{equation}
Thus, for the global power response at fixed $T$, sufficiently large orders satisfy $T/t_q \gg 1$ and the terminal exponent tends to $1/p$. Suppose instead that the power response is only a local onset approximation, valid for $|D| \ll \chi _c$, where $\chi _c$ is the scale over which the threshold density varies. The conclusion then applies only on an overlap range satisfying
\begin{equation} \label{eq:general-onset-overlap}
T/t_q \gg 1, \qquad (q/ \omega )^{1/p} \ll \chi _c.
\end{equation}
Within such a range the intermediate exponent is $1/(k + n + 1)$. It is one half for a density positive at zero and linear individual demand. Holding $n = 1$ fixed, a density vanishing at zero increases $p$ and lowers the impact exponent. Holding $k = 0$ fixed, convex individual demand has the same effect. The converse changes increase the exponent only within the admissible range $p > 1$. No concave quasi-stationary regime is inferred from this argument when $p \le 1$, or when the two conditions in \eqref{eq:general-onset-overlap} do not overlap.
\end{myremark}

In the numerical study we take an exponential threshold density without an atom,
\begin{equation} \label{eq:cf-exp-density}
\Pi (d \chi ) = \frac{q_\star}{d_c^2}\,e^{ - \chi /d_c}\,d \chi ,\qquad q_\star > 0,\quad d_c > 0,
\end{equation}
which gives
\begin{equation} \label{eq:cf-exp}
\mathcal A(D) = q_\star\operatorname{sgn}(D)\left(\frac{|D|}{d_c} - 1 + e^{ - |D|/d_c}\right).
\end{equation}
Here $d_c$ is the mean threshold and $q_\star/d_c$ is the total response intensity $| \Pi |$. The counterflow is quadratic, $\mathcal A(D)\approx q_\star D|D|/(2d_c^2)$, for $|D|\ll d_c$, and linear, $\mathcal A(D)\approx q_\star\operatorname{sgn}(D)(|D| - d_c)/d_c$, for $|D|\gg d_c$. The dependence on the order-flow history enters through $D_t$, which accumulates the net flow, and through the latent pool of \cref{sec:pool}. Other threshold distributions and individual response rules are admissible.

The thresholds depend on the prevailing volatility. A counterparty must distinguish the footprint of the order from ordinary price fluctuations, and it can do so only when the displacement is large relative to the noise accumulated over its detection horizon $\tau _d$. We therefore measure thresholds and excesses in units of the noise scale $s_t = \sigma _t\sqrt{ \tau _d}$. Let $\hat \Pi$ be a fixed measure on normalized thresholds $\zeta = \chi /s_t$, with mass in units of volume per unit time, and let each counterparty trade at a rate proportional to its normalized excess. The fresh-pool counterflow then takes the form
\begin{equation} \label{eq:cf-scaled}
\mathcal A_t(D) = \hat{\mathcal A}\big(D/s_t\big),\qquad \hat{\mathcal A}(x) = \operatorname{sgn}(x)\int_{[0,|x|]}\big(|x| - \zeta \big)\,\hat \Pi (d \zeta ).
\end{equation}
The normalized function $\hat{\mathcal A}$ has the form \eqref{eq:cf-aggregate}, so \cref{prop:cf-structure} applies to it at every time. For the exponential density, $\hat \Pi (d \zeta ) = (q_\star/c_d^2)\,e^{ - \zeta /c_d}\,d \zeta$ with a dimensionless constant $c_d > 0$, and \eqref{eq:cf-exp} holds with the time-dependent threshold scale
\begin{equation} \label{eq:dc-vol}
d_c(t) = c_d\, \sigma _t\sqrt{ \tau _d}.
\end{equation}
The counterflow thus depends on the displacement measured in units of the prevailing noise. Where no confusion arises, we write $\mathcal A(D)$ for $\mathcal A_t(D)$.

\subsection{The latent pool} \label{sec:pool}

The population of \cref{sec:counterflow} is not inexhaustible. Counterparties that absorbed earlier flow carry inventory and exhaust risk limits, and new counterparties arrive over several time scales. The forcing by submitted order flow in \eqref{eq:gle} represents pressure on available liquidity in reduced form. The latent state is not an inventory variable, and its evolution does not impose a conservation equation between pool depletion and realized counterflow. We represent the current pool size by a multiplicative intensity that depends on the latent state of \cref{sec:lift},
\begin{equation} \label{eq:qcf}
q_t^{\mathrm{cf}} = \rho\big(\operatorname{sgn}(D_t)\,Y_t\big)\,\mathcal A_t(D_t),
\end{equation}
where $\rho : \RR \to (0,\infty)$ is nonincreasing with $\rho(0) = 1$. The order-flow coupling in \eqref{eq:gle} is signed, and the cross-memory amplitudes are positive in the baseline, so buying raises $Y_t$. When $D_t > 0$ the active counterparties are sellers, and $\rho(Y_t) < 1$ records that past buying has depleted them. When $D_t < 0$ the argument is $-Y_t$, and past selling has depleted the buyers. The baseline map is a logistic function with a floor,
\begin{equation} \label{eq:rho}
\rho(y) = \rho_{\mathrm{fl}} + \frac{2\,(1 - \rho_{\mathrm{fl}})}{1 + e^{y/y_\rho}}, \qquad y_\rho > 0, \quad 0 \le \rho_{\mathrm{fl}} < 1,
\end{equation}
which is smooth, takes values in $(\rho_{\mathrm{fl}}, 2 - \rho_{\mathrm{fl}})$, and satisfies $\rho(y) + \rho(-y) = 2$. The scale $y_\rho$ sets how far the latent state must move before depletion becomes appreciable. The floor $\rho_{\mathrm{fl}}$ is the share of the pool that never depletes, such as dealers whose capacity one order cannot exhaust. A value above one on the side opposite to recent flow can be read as counterparties that absorbed that flow and now trade to unwind their inventory.

Without a floor, sufficiently strong depletion can weaken the counterflow and produce a convex transition toward the linear response without counterflow. A positive floor prevents the pool intensity from vanishing, but does not by itself guarantee concavity of impact.

Memory enters the counterflow only through the current level of $Y_t$. The counterflow has no memory kernel of its own. At $D_t = 0$ it vanishes whatever the value of $Y_t$. The latent state therefore acts on the price only after a displacement has formed, which is the defining property of latent liquidity. The displayed depth $L_0$ plays the role of the visible book.

With \eqref{eq:qcf}, the price equation \eqref{eq:price} becomes
\begin{equation} \label{eq:combined-drift}
dX_t = \frac{q_t - \rho\big(\operatorname{sgn}(D_t)\,Y_t\big)\,\mathcal A_t(D_t)}{L_0}\,dt + \sigma _t\,dW_t^X,\qquad D_t = X_t - m_t .
\end{equation}
By \eqref{eq:reference}, the displacement solves
\begin{equation} \label{eq:displacement-ode}
dD_t = \frac{q_t - q_t^{\mathrm{cf}}}{L_0}\,dt,\qquad D_0 = 0,
\end{equation}
so the common price diffusion cancels. Randomness can still enter through the latent state and the volatility-dependent threshold scale. Under the fixed-volatility specification used in the numerical study, only the former remains. The latent-liquidity dynamics of \cref{sec:lift} are unchanged. Because the displacement depends on the price, \eqref{eq:combined-drift} is an equation with feedback, solved together with the latent dynamics.

When $\pi _0 = 0$, the counterflow is of second order in the displacement near the reference price. It therefore leaves the linear response to a small order unchanged, consistent with \cref{sec:benchmark}. At larger displacement it grows at most linearly, with slope bounded by $| \Pi |\sup_y \rho(y)$. After execution ceases, the remaining counterflow contributes a restoring drift whose strength recovers with the pool. Whether the expected displacement decays, and on which time scale, is examined in \cref{sec:relaxation}. Neither pathwise convergence under persistent noise nor an extended concave-impact regime is assumed.

Two limits bracket the model. The fresh pool, $\rho\equiv1$, removes depletion, and the counterflow then depends on the displacement only. Setting $\Pi = 0$ removes the counterflow, and the price follows the linear response of \cite{Kyle1985}.

\begin{myremark}[Effective liquidity]\label{rem:effective-liquidity}
At times when $q_t\ne0$ and $q_t - q_t^{\mathrm{cf}}\ne0$, the drift in \eqref{eq:price} can be written as $q_t/L_t^{\mathrm{eff}}$ with
\begin{equation} \label{eq:Leff}
L_t^{\mathrm{eff}} = L_0\,\frac{q_t}{q_t - q_t^{\mathrm{cf}}}.
\end{equation}
Under the restriction
\begin{equation} \label{eq:admissible-counterflow}
0\le \frac{q_t^{\mathrm{cf}}}{q_t} < 1,
\end{equation}
the counterflow opposes but does not reverse the submitted order, and $L_t^{\mathrm{eff}}\ge L_0$, with equality only when $q_t^{\mathrm{cf}} = 0$. Revealed latent liquidity raises the effective liquidity that the order meets, and depletion of the pool lowers it. The effective liquidity is thus an output driven by the latent state. It is defined only during execution. After execution, $q_t = 0$ while $q_t^{\mathrm{cf}}$ may remain nonzero, and the drift is $-q_t^{\mathrm{cf}}/L_0$. The instantaneous drift constrains only the product of pool intensity and threshold response, so an impact curve alone need not identify the pool dynamics.
\end{myremark}

The lifted system is well posed under the baseline assumptions.

\begin{lemma}[Well-posedness]\label{lem:wellposed}
Let $q$ be a bounded measurable deterministic schedule on $[0,H]$, and let the threshold scale $s_t$ be deterministic, continuous, and positive on $[0,H]$. Assume the potential \eqref{eq:quartic}, the finite kernels \eqref{eq:Kyy} and \eqref{eq:Kyx} with bounded Lipschitz amplitudes $c_j$, a bounded pool map $\rho$ with bounded derivative, and a normalized threshold measure $\hat\Pi$ of finite mass. Then, for the initial law of \cref{prop:lift} and $D_0 = 0$, the system \eqref{eq:hg-ode}, \eqref{eq:gle-lift}, \eqref{eq:displacement-ode} with counterflow \eqref{eq:qcf} has a unique global strong solution on $[0,H]$. Moreover, $|D_t|\le L_0^{-1}\int_0^t|q_s|\,ds$ and $\sup_{t\le H}\EE\big[Y_t^4\big] < \infty$.
\end{lemma}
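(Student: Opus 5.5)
We rewrite the lifted system as a finite-dimensional SDE for the state $(\mathbf Z_t, D_t)$, with $\mathbf Z_t$ from \eqref{eq:Zdef}. After substituting \eqref{eq:gle-lift} into \eqref{eq:hg-ode}, the drift is
\begin{equation*}
b(t,y,h,g,d)=\Big(-U'(y)-\textstyle\sum_i a_ih_i+\sum_j g_j,\;\; -\gamma_ih_i+\big[-U'(y)-\sum_k a_kh_k+\sum_j g_j\big],\;\; -\lambda_jg_j+c_j(y)q_t,\;\; \tfrac{q_t-\rho(\operatorname{sgn}(d)y)\hat{\mathcal A}(d/s_t)}{L_0}\Big).
\end{equation*}
The noise is additive, constant, and enters only the $h_i$ components as $-\sigma_{Y,i}\,dW_i$. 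The approach is to establish local existence and uniqueness, then use a priori bounds to exclude explosion. Since the coefficients are only measurable in $t$, through $q$, we use the Carathéodory form of the standard results. With additive noise, one may also argue pathwise: set $\tilde h_i=h_i+\sigma_{Y,i}W_i$, which reduces the system to a random ODE, and apply Carathéodory theory on each path.

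\emph{Step 1 (local Lipschitz).} The polynomial $U'$ is locally Lipschitz, and $c_j$ is bounded and Lipschitz. For the counterflow term, \cref{prop:cf-structure}(i) applied to $\hat{\mathcal A}$ shows that $\hat{\mathcal A}$ is Lipschitz with constant $|\hat\Pi|$. Since $s_t$ is continuous and positive on a compact interval, it is bounded below, so $d\mapsto\hat{\mathcal A}(d/s_t)$ is uniformly Lipschitz in $t$. The only delicate point is the factor $\rho(\operatorname{sgn}(d)y)$, which jumps across $d=0$. It is multiplied by $\hat{\mathcal A}(d/s_t)$, which vanishes at $d=0$. We therefore compare $\rho(\operatorname{sgn}(d)y)\hat{\mathcal A}(d)$ with $\rho(\operatorname{sgn}(d')y')\hat{\mathcal A}(d')$ separately for same-sign and opposite-sign pairs. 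In the opposite-sign case we use $|\hat{\mathcal A}(d)|+|\hat{\mathcal A}(d')|\le |\hat\Pi|(|d|+|d'|)/\underline s=|\hat\Pi||d-d'|/\underline s$ together with $\sup|\rho|<\infty$. In the same-sign case the factor $\rho$ is Lipschitz in $y$ because $\rho'$ is bounded. This step is where I expect the main care to be needed, and it yields a locally Lipschitz drift, hence a unique local strong solution up to an explosion time.

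\emph{Step 2 (bound on $D$).} Before any exit time, $\operatorname{sgn}(\rho\hat{\mathcal A}(D_t/s_t))=\operatorname{sgn}D_t$ because $\rho>0$. Hence $\frac{d}{dt}|D_t|\le |q_t|/L_0$ almost everywhere, since the counterflow term always pushes toward zero; it suffices to check this for $|D_t|$, which is absolutely continuous. Integrating gives $|D_t|\le L_0^{-1}\int_0^t|q_s|\,ds$.

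\emph{Step 3 (non-explosion and moments).} For $g$, variation of constants gives $|g_{j,t}|\le \|c_j\|_\infty\|q\|_\infty/\lambda_j$ deterministically. Next introduce $\tilde h_{i,t}=h_{i,t}+\eta_{i,t}=\int_0^te^{-\gamma_i(t-s)}dY_s$, where $\eta_i$ is the Gaussian OU process, so that \eqref{eq:gle-lift} reads $\dot Y=-U'(Y)-\sum a_i\tilde h_i+\sum g_j+\xi_t$.

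We then use the Lyapunov function
\[
V=U(Y)+\sum_i \tfrac{a_i}{2}\tilde h_i^2+C,
\]
with $C$ chosen so that $V\ge0$; this is possible because $u_4>0$. Since $\dot{\tilde h}_i=-\gamma_i\tilde h_i+\dot Y$, we obtain
\begin{equation*}
\dot V=U'(Y)\dot Y+\sum_i a_i\tilde h_i(-\gamma_i\tilde h_i+\dot Y)=\dot Y\big(U'+\textstyle\sum a_i\tilde h_i\big)-\sum a_i\gamma_i\tilde h_i^2 .
\end{equation*}
Because $U'+\sum a_i\tilde h_i=-\dot Y+\sum g_j+\xi$, this becomes
\begin{equation*}
\dot V=-\dot Y^2+\dot Y\big(\textstyle\sum g_j+\xi\big)-\sum a_i\gamma_i\tilde h_i^2\le \tfrac14\big(\textstyle\sum g_j+\xi_t\big)^2 .
\end{equation*}
Therefore $V_t\le V_0+\tfrac14\int_0^t(\sum g+\xi)^2ds$, which is finite pathwise, so there is no explosion on $[0,H]$. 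Moreover $U(y)\ge \tfrac{u_4}{8}y^4-C'$, so $Y_t^4\lesssim 1+V_t$.

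For the fourth moment, $\EE V_t$ is bounded by $\EE V_0$ plus a constant times $\int_0^H(\|g\|_\infty^2+\EE\xi_s^2)\,ds$, and this is finite. Here $V_0=U(0)+C$ because $\tilde h_{i,0}=0$, and $\xi$ is stationary Gaussian with finite variance. Taking suprema over $t\le H$ then gives $\sup_{t\le H}\EE[Y_t^4]<\infty$.

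Finally, $h_i=\tilde h_i-\eta_i$ is controlled by $V$ and the Gaussian modes, so the whole state is non-explosive. This completes global existence and uniqueness.
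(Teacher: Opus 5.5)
Your proposal is correct and follows essentially the same route as the paper: it splits the counterflow into same-sign and opposite-sign cases for local Lipschitz continuity, uses the sign of the counterflow to bound $|D_t|$, and applies the Lyapunov function $U(Y)+\tfrac12\sum_i a_i(h_i+\eta_i)^2$ with $\dot V\le\tfrac14\big(\xi_t+\sum_j g_{j,t}\big)^2$ to rule out explosion and bound the fourth moment. The remaining differences are cosmetic: you write the cross term as $-\dot Y^2+\dot Y f$ instead of $-\Phi^2+\Phi f$, you add a constant so that $V\ge0$, and you remark on Carath\'eodory theory for the time-measurable drift.
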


\begin{proof}
See \cref{sec:roundtrip-proof}.
\end{proof}

\subsection{Round-trip cost and absence of price manipulation} \label{sec:no-manipulation}

The counterflow always opposes the displacement that elicits it. This sign structure has a consequence for the cost of trading. We measure execution cost in log-price units, corresponding to a first-order displacement convention. The expected cost of a schedule $q$ on $[0,H]$ relative to the arrival price is
\begin{equation} \label{eq:cost}
\mathcal C[q] = \EE\left[\int_0^H q_t\,(X_t - X_0)\,dt\right].
\end{equation}
A round trip is a schedule with zero net position at the horizon $H$, that is, $\int_0^H q_t\,dt = 0$. Following the round-trip criterion of \cite{HubermanStanzl2004}, we call a schedule manipulative under this cost convention if $\mathcal C[q] < 0$. The result below concerns this specified log-price functional.

\begin{proposition}[Round-trip cost]\label{prop:roundtrip}
Let $q$ be a bounded predictable trading rate on $[0,H]$ with $\int_0^H q_t\,dt = 0$. Assume that $\int_0^{\cdot} \sigma _t\,dW_t^X$ is a square-integrable martingale and that the counterflow satisfies $q_t^{\mathrm{cf}}D_t\ge 0$ for all $t$. Then
\begin{equation} \label{eq:roundtrip-identity}
\mathcal C[q] = \frac{L_0}{2}\,\EE\big[D_H^2\big] + \EE\left[\int_0^H q_t^{\mathrm{cf}}\,D_t\,dt\right] \ge 0 .
\end{equation}
In particular, whenever the dynamics and expectations are well defined, the model \eqref{eq:combined-drift} has nonnegative round-trip costs under \eqref{eq:cost} for any potential, memory kernels, pool map $\rho > 0$, threshold measure, and initial law.
\end{proposition}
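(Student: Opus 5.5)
The plan is to decompose the price change into the displacement and the reference increment, $X_t - X_0 = D_t + (m_t - m_0)$, using $D_0 = 0$ and \eqref{eq:mispricing}. The cost then splits as
\begin{equation*}
\mathcal C[q] = \EE\left[\int_0^H q_t D_t\,dt\right] + \EE\left[\int_0^H q_t (m_t - m_0)\,dt\right].
\end{equation*}
The first term carries the impact, and the second is the reference noise, which I expect to have zero expectation for a round trip.

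For the noise term, I introduce the remaining position $P_t = \int_t^H q_s\,ds$. The round-trip condition gives $P_0 = 0$, and $P_H = 0$ by definition. Writing $M_t = m_t - m_0 = \int_0^t \sigma_s\,dW^X_s$, I apply Fubini to get $\int_0^H q_t M_t\,dt = \int_0^H \big(\int_s^H q_t\,dt\big)\sigma_s\,dW^X_s$, which equals $\int_0^H P_s\,\sigma_s\,dW_s^X$. The difficulty is that $P_s$ is not adapted when $q$ is merely predictable. I would therefore use integration by parts in the other direction. Setting $R_t = \int_0^t q_s\,ds$ (adapted and bounded), we have $d(R_t M_t) = q_t M_t\,dt + R_t\,dM_t$. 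This gives $\int_0^H q_t M_t\,dt = R_H M_H - \int_0^H R_t\,dM_t$, and $R_H = 0$. Since $R$ is bounded and $M$ is a square-integrable martingale, the stochastic integral has zero mean. Hence the noise term vanishes.

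For the impact term, I use \eqref{eq:displacement-ode}, which gives $q_t = L_0\dot D_t + q_t^{\mathrm{cf}}$. Then
\begin{equation*}
\int_0^H q_t D_t\,dt = L_0\int_0^H D_t\dot D_t\,dt + \int_0^H q^{\mathrm{cf}}_t D_t\,dt = \tfrac{L_0}{2}D_H^2 + \int_0^H q^{\mathrm{cf}}_t D_t\,dt.
\end{equation*}
This uses $D_0 = 0$ and the fact that $D$ is absolutely continuous pathwise, because its dynamics have no diffusion term. Taking expectations yields the identity \eqref{eq:roundtrip-identity}. Each term is nonnegative, the second by the hypothesis $q^{\mathrm{cf}}_tD_t \ge 0$. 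Integrability follows from the bound $|D_t| \le L_0^{-1}\int_0^t |q_s|\,ds$ of \cref{lem:wellposed}, or more generally from the assumed well-definedness.

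For the final claim, I check the sign condition for the model's counterflow \eqref{eq:qcf}. Here $q^{\mathrm{cf}}D = \rho(\cdot)\,\mathcal A_t(D)\,D$ with $\rho > 0$. Moreover $\operatorname{sgn}\mathcal A_t(D) = \operatorname{sgn} D$, by \eqref{eq:cf-scaled} and part (i) of \cref{prop:cf-structure} applied to the normalized function. None of this uses the potential, the kernels, or the initial law.

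The main obstacle is the noise term. One must avoid the anticipating representation through $P_s$ and justify the zero mean using only predictability of $q$ and square integrability of $M$. The integration-by-parts route with the adapted inventory $R_t$ handles this.
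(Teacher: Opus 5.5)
Your proposal is correct and follows the paper's proof essentially step for step. It uses the same split $X_t - X_0 = D_t + (m_t - m_0)$, the same integration by parts against the adapted inventory (your $R_t$ is the paper's position $x_t$, with $R_H = 0$ removing the boundary term and boundedness plus the square-integrable martingale assumption giving zero mean), the same identity $q_t = L_0\dot D_t + q_t^{\mathrm{cf}}$ yielding $\tfrac{L_0}{2}D_H^2 + \int_0^H q_t^{\mathrm{cf}}D_t\,dt$, and the same sign argument from $\rho > 0$ for the final claim, where only $\mathcal A_t(D)\,D \ge 0$ is actually needed.
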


\begin{proof}
See \cref{sec:roundtrip-proof}.
\end{proof}

\Cref{lem:wellposed} supplies the well-posedness and finite expectations that \cref{prop:roundtrip} requires for bounded deterministic schedules. The proof of \cref{prop:roundtrip} uses the martingale property of the log reference $m_t$ in \eqref{eq:reference}. Costs measured in price units are not covered and would require a separate argument.

The identity separates two costs. The first is the displacement left at the end of the round trip. The second is the counterflow the trader has faced, which is nonnegative because counterparties always trade against the displacement. Neither term can be made negative by the evolution of the latent state, so the memory spectrum cannot create a profitable round trip. A related result holds in limit-order-book models with nonlinear price impact and exponential resilience \cite{AlfonsiSchied2010}. Both contrast with \cite{Gatheral2010}, where nonlinear impact with exponential decay admits price manipulation. Here the price responds linearly to net flow, and the nonlinearity sits in the restoring counterflow. Transaction-triggered manipulation in the sense of \cite{AlfonsiSchiedSlynko2012} is not addressed.

A state-dependent depth would add a term to \eqref{eq:roundtrip-identity} that can be negative, because a depth that rises while the displacement is large lowers the cost of unwinding. This is the reason the depth is held constant. \Cref{sec:conclusions} returns to this point.

\subsection{A nested hierarchy} \label{sec:hierarchy}

The model admits a nested hierarchy designed to attribute any observed effect to a specific component. The Kyle benchmark sets $\Pi = 0$, so impact is linear and permanent. The fresh pool sets $\rho\equiv1$, so the counterflow depends on the displacement only, and \cref{sec:cf-impact} solves it in closed form for quadratic onset. The single-mode pool keeps \eqref{eq:qcf} but uses one intrinsic mode and one order-flow mode, $N = M = 1$. Its rates are the geometric means of the baseline rates, and its weights preserve the integrated strengths $\int_0^\infty K_{YY}(s)\,ds$ and $\int_0^\infty K_{YX}(s)\,ds$. The GLE pool is the full model. The last two variants differ only in the memory spectrum, so their difference measures what the multi-scale memory contributes.

The numerical study compares these variants under common schedules. A precise comparison across order sizes requires a specified family of execution schedules, which is introduced in \cref{sec:benchmark}.

\section{Small-order impact} \label{sec:benchmark}

We first specify the family of orders whose impact is to be compared. Fix an execution horizon $T$ and a nonnegative deterministic schedule shape $\psi _T : [0,T] \to \RR_ +$ normalized by
\begin{equation} \label{eq:schedule-shape}
\int_0^T \psi _T(s)\, ds = 1.
\end{equation}
For each signed order size $Q$ defined in \eqref{eq:Q}, scale this fixed shape according to
\begin{equation} \label{eq:scaled-schedule}
q_s(Q) = Q \psi _T(s), \qquad 0 \le s \le T.
\end{equation}
Thus, $Q$ changes the size of the metaorder but not its duration or execution profile. The constant-rate schedule used in \cref{sec:experiments} is the special case $\psi _T(s) = 1/T$.

For the family \eqref{eq:scaled-schedule}, define the terminal impact by
\begin{equation} \label{eq:impact-family}
\mathcal I_T(Q) := I\!\left[q_{[0,T]}(Q)\right].
\end{equation}
The local impact exponent is then
\begin{equation} \label{eq:delta-eff}
\delta _{\mathrm{eff}}(Q;T) = \frac{d \log |\mathcal I_T(Q)|}{d \log |Q|},
\end{equation}
whenever $Q \ne 0$ and $\mathcal I_T(Q) \ne 0$. This definition compares orders that differ only in size. A power-law regime corresponds to an approximately constant $\delta _{\mathrm{eff}}(Q;T)$ over a nontrivial range of $Q$, and a square-root regime to the case in which that constant is close to one half.

We now derive the small-order behavior of the model. Let $Y_t^{[0]}$ denote the latent state without the order, driven by the same memory noise with $q\equiv 0$, as in \cref{sec:gle}. Assume $\pi _0 = 0$ and that the density $\hat\pi$ of $\hat\Pi$ is continuous and positive at zero. Part (ii) of \cref{prop:cf-structure} then gives $\mathcal A_t(D) = \omega_t D|D| + o(D^2)$ with $\omega_t = \hat\pi(0)/(2s_t^2)$. Denote the executed fraction of the schedule by
\begin{equation} \label{eq:executed-fraction}
\Psi_T(s) = \int_0^s \psi_T(s')\,ds', \qquad 0\le s\le T,
\end{equation}
and the mean pool intensities on the two sides by $\bar\rho_s^{\pm} = \EE\big[\rho(\pm Y_s^{[0]})\big]$. Assume that the latent state depends smoothly on the order amplitude, so that $Y_t = Y_t^{[0]} + O(Q)$ in mean square, and that the normalized second-order remainders are uniformly integrable after time integration. The required joint moments of the pool intensity and inverse threshold scale are assumed finite. Then, as $Q \to 0$,
\begin{equation} \label{eq:impact-expansion}
\mathcal I_T(Q) = \frac{Q}{L_0} - \frac{Q|Q|}{L_0^3}\int_0^T \EE\!\left[\omega_s\,\rho\big(\operatorname{sgn}(Q)Y_s^{[0]}\big)\right]\Psi_T(s)^2\,ds + o(Q^2).
\end{equation}
The expansion follows from \eqref{eq:displacement-ode}, because $D_t = Q\Psi_T(t)/L_0 + O(Q^2)$, the displacement has the sign of $Q$, and the counterflow is of second order in the displacement. Therefore
\begin{equation} \label{eq:benchmark-exponent}
\lim_{Q \to 0} \delta _{\mathrm{eff}}(Q;T) = 1.
\end{equation}
The first nonlinear correction makes the impact magnitude concave in order-size magnitude for both signs. If $\omega_s$ is deterministic, the joint expectation in \eqref{eq:impact-expansion} reduces to $\omega_s\bar\rho_s^{\operatorname{sgn}Q}$. With stochastic volatility, the correlation between the threshold scale and the pool must be retained. Under the baseline, the potential is even, the initial law is symmetric, and $\rho(y) + \rho(-y) = 2$. The no-order law of $Y_t$ is then symmetric, so $\bar\rho_s^{\pm} = 1$ and the correction is identical for the two signs. For the constant-rate schedule with fixed thresholds, $\omega_s = \omega$ and $\int_0^T\Psi_T(s)^2\,ds = T/3$. The correction then reduces to $-\omega Q|Q|T/(3L_0^3)$. For the exponential threshold density and smooth pool coupling used at the baseline, depletion by the order enters at cubic order, through the slope of $\rho$ and the response of the latent state, together with the cubic term of the counterflow onset.

\section{Impact with a fresh pool} \label{sec:cf-impact}

The construction of \cref{sec:counterflow} yields an explicit impact law when the pool is fresh, $\rho\equiv1$. Hold the ambient volatility fixed over the execution, $\sigma _t\equiv \sigma$, so that the threshold scale $s = \sigma \sqrt{ \tau _d}$ is constant, and consider the constant-rate schedule $q_t = Q/T$ on $[0,T]$ with $Q > 0$. By \eqref{eq:displacement-ode}, the displacement is then deterministic. It starts at zero and solves
\begin{equation} \label{eq:cf-ode}
L_0\,\frac{dD_t}{dt} = \frac{Q}{T} - \mathcal A(D_t),\qquad D_0 = 0.
\end{equation}
Suppose that $\mathcal A$ is strictly increasing on $[0,\infty)$ and unbounded, as it is for \eqref{eq:cf-exp}. Let $D_\infty(q)$ denote the stationary displacement defined by $\mathcal A(D_\infty) = q$. The displacement increases monotonically toward $D_\infty(Q/T)$ and never exceeds it. Since the counterflow is nonnegative along this path, the terminal impact satisfies
\begin{equation} \label{eq:bounded-counterflow}
0\le\mathcal I_T(Q)\le\min\left(\frac{Q}{L_0},\,D_\infty(Q/T)\right).
\end{equation}
Negative orders follow by symmetry.

The quadratic onset of \cref{prop:cf-structure} determines the impact law when the displacement remains small relative to the thresholds.

\begin{proposition}[Impact under quadratic counterflow]\label{prop:sqrt-cf}
Let $\mathcal A(D) = \omega D|D|$ with $\omega > 0$. Then the solution of \eqref{eq:cf-ode} gives
\begin{equation} \label{eq:sqrt-cf}
\mathcal I_T(Q) = \sqrt{\frac{Q}{ \omega T}}\,\tanh\!\left(\frac{\sqrt{ \omega QT}}{L_0}\right).
\end{equation}
Consequently, $\delta _{\mathrm{eff}}(Q;T)\to 1$ as $Q\to 0$ and $\delta _{\mathrm{eff}}(Q;T)\to\tfrac12$ as $Q\to\infty$.
\end{proposition}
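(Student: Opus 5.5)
With $Q>0$, the right side of \eqref{eq:cf-ode} is positive at $D=0$. The bound \eqref{eq:bounded-counterflow} keeps $D_t\in[0,D_\infty)$ with $D_\infty=\sqrt{q/\omega}$ and $q=Q/T$. On this range $\mathcal A(D)=\omega D^2$, so the equation becomes the Riccati equation $L_0\dot D=q-\omega D^2$ with $D_0=0$. I will solve it by separation of variables:
\[
\int_0^{D_t}\frac{L_0\,dD}{q-\omega D^2}=t .
\]
With the substitution $D=D_\infty u$, the left side equals $\frac{L_0}{\sqrt{q\omega}}\,\operatorname{artanh}(D_t/D_\infty)$. Hence
\[
D_t=\sqrt{q/\omega}\,\tanh\!\big(\sqrt{q\omega}\,t/L_0\big).
\]
I will confirm this directly: differentiation and $\operatorname{sech}^2=1-\tanh^2$ give $L_0\dot D=q-\omega D^2$, and $D_0=0$. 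Uniqueness follows because the vector field is locally Lipschitz, so the explicit formula is the solution. Since the displacement is deterministic and $dm$ has mean zero, $\mathcal I_T(Q)=\EE[X_T-X_0]=D_T$. Setting $t=T$ and $q=Q/T$ gives $\sqrt{q\omega}\,T=\sqrt{\omega QT}$, which yields \eqref{eq:sqrt-cf}.

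For the exponent, write $x=\sqrt{\omega QT}/L_0$, so that $\log\mathcal I_T=\tfrac12\log Q+\log\tanh x+\text{const}$ and $d\log x/d\log Q=\tfrac12$. Then
\[
\delta_{\mathrm{eff}}=\tfrac12+\tfrac12\,\frac{x\,\tanh'(x)}{\tanh x}=\tfrac12+\frac{x}{\sinh(2x)},
\]
using $\tanh'/\tanh=1/(\sinh x\cosh x)=2/\sinh 2x$. As $Q\to0$, $x\to0$ and $x/\sinh 2x\to\tfrac12$, so $\delta_{\mathrm{eff}}\to1$. As $Q\to\infty$, $x/\sinh 2x\to0$, so $\delta_{\mathrm{eff}}\to\tfrac12$. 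This also shows that $\delta_{\mathrm{eff}}$ decreases monotonically between the two limits, which is a useful remark to include.

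The only delicate point is justifying that the path stays in $[0,D_\infty)$, so that the sign in $D|D|$ never switches. I would get this either from the monotonicity argument preceding \eqref{eq:bounded-counterflow} or a posteriori from $\tanh<1$ together with uniqueness. Everything else is routine calculus. Negative $Q$ follows by oddness.
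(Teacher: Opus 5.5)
Your proof is correct. The derivation of the closed form is the same as the paper's, and you differ only in how you obtain the exponent limits.

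\textbf{Closed form.} Both you and the paper use that the drift is positive on $[0,D_\infty)$, so the path stays there and $D|D|=D^2$. Both then reduce to $L_0\dot D=q-\omega D^2$ and separate variables to get $D_t=D_\infty\tanh(\omega D_\infty t/L_0)$. Your a posteriori check is the cleanest way to settle the invariance point: the $\tanh$ candidate is nonnegative, so it solves the equation with $D|D|=D^2$, and local Lipschitz continuity of $D\mapsto\omega D|D|$ gives uniqueness. Citing \eqref{eq:bounded-counterflow} by itself would be weaker, since that bound concerns only terminal impact. The path-level claim is the monotonicity remark that precedes it.

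\textbf{Exponent limits.} The paper works from asymptotics of $\mathcal I_T$ itself.
\begin{itemize}
\item For small $Q$, $\tanh x=x+O(x^3)$ gives $\mathcal I_T(Q)=Q/L_0+O(Q^2)$.
\item For large $Q$, exponential convergence of $\tanh$ to one gives $\log\mathcal I_T=\tfrac12\log Q+\mathrm{const}+o(1)$.
\end{itemize}
The paper then reads off the limits of the logarithmic derivative. Strictly, this requires that the remainders can be differentiated, which holds here because they are analytic and exponentially small, respectively. You instead compute $\delta_{\mathrm{eff}}=\tfrac12+x/\sinh(2x)$ exactly. This controls the derivative directly and gives both limits at once. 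It also shows that $\delta_{\mathrm{eff}}$ decreases monotonically from $1$ to $\tfrac12$, because $\sinh(2x)/x$ is increasing (equivalently, $\tanh y<y$ for $y>0$). That is a small strengthening the paper does not state. You should include that one-line check rather than just asserting the monotonicity.
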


\begin{proof}
The proof is given in \cref{sec:proof4}.
\end{proof}

The two limits have a direct interpretation. A small order ends before the counterflow has grown enough to oppose it, and its impact is the linear value $Q/L_0$. Expanding \eqref{eq:sqrt-cf} to second order in $Q$ recovers the correction $-\omega Q|Q|T/(3L_0^3)$ of \eqref{eq:impact-expansion}. A large order reaches the stationary displacement $\sqrt{Q/( \omega T)}$ within the execution horizon, and its impact grows as the square root of its size.

For the aggregate counterflow \eqref{eq:cf-aggregate} with $\pi _0 = 0$, part (ii) of \cref{prop:cf-structure} gives $\omega = \pi (0)/2$. The quadratic form holds for displacements small relative to the scale $\chi _c$ over which $\pi$ varies, and $\chi _c = d_c$ for \eqref{eq:cf-exp-density}. The square-root law $\mathcal I_T(Q)\approx\sqrt{Q/( \omega T)}$ therefore holds for order sizes in the range
\begin{equation} \label{eq:sqrt-range}
\frac{L_0^2}{ \omega T}\ll Q\ll \omega \, \chi _c^2\,T.
\end{equation}
The lower bound is the size at which the counterflow balances the order within the horizon. The upper bound marks the loss of validity of the quadratic onset approximation. At still larger displacement, part (iii) of \cref{prop:cf-structure} yields asymptotically linear counterflow and hence linear large-order scaling at fixed duration. The model does not assume that active counterparties stop trading when this crossover occurs. Define the counterflow time scale
\begin{equation} \label{eq:tau-cf}
\tau _{\mathrm{cf}} = \frac{L_0}{ \omega \, \chi _c},
\end{equation}
the time over which the counterflow removes a displacement of size $\chi _c$. The ratio of the two bounds in \eqref{eq:sqrt-range} is $(T/ \tau _{\mathrm{cf}})^2$. A parametrically separated square-root regime requires executions long relative to $\tau _{\mathrm{cf}}$. The ratio of its asymptotic bounds grows as the square of the duration. The width measured under a specified numerical tolerance need not equal this ratio. For the exponential density, $\omega = q_\star/(2d_c^2)$ and $\tau _{\mathrm{cf}} = 2L_0d_c/q_\star$.

The scaling of thresholds with volatility determines how impact depends on it. Let $\hat \pi$ denote the density of $\hat \Pi$ in \eqref{eq:cf-scaled}. Then $\omega = \hat \pi (0)/(2s^2)$, and the square-root law becomes
\begin{equation} \label{eq:sqrt-vol}
\mathcal I_T(Q)\approx \sigma \sqrt{ \tau _d}\,\sqrt{\frac{2Q}{\hat \pi (0)\,T}}.
\end{equation}
Impact in this regime is proportional to volatility and to the square root of the order size divided by the response-volume scale $\hat \pi (0)T$. This scale is determined by the counterparty response measure. With a fixed detection horizon it has not been identified with traded market volume over a daily or execution horizon. Consequently, the result reproduces square-root size dependence and volatility proportionality under the stated assumptions, and the empirical normalization of the square-root law requires the further assumptions of \cref{rem:duration-free} below. Volatility proportionality requires both thresholds proportional to $\sigma \sqrt{ \tau _d}$ and individual trading rates proportional to excess displacement measured in those units, with the normalized response measure held fixed. Threshold scaling alone is insufficient. For the exponential density, $\hat \pi (0) = q_\star/c_d^2$. The counterflow time scale is then proportional to $\sigma$, so higher volatility narrows the asymptotic overlap range at fixed duration.

The exponent one half is a consequence of the assumed linear individual response and the positive continuous threshold density at zero. It is not fitted directly to an impact curve, but neither is it independent of these behavioral assumptions. The resulting law is an intermediate asymptotic with linear scaling below and above it. A crossover from linear to square-root impact with increasing order size is documented empirically in \cite{BucciEtAl2019}, where quantitative agreement with a latent-order-book theory requires at least two liquidity time scales. An atom $\pi _0 > 0$ adds a linear term near zero, reducing the range over which quadratic counterflow can dominate and potentially eliminating that range.

Bounded counterflow has a different large-order implication. If $0 \le \mathcal A(D) \le q_\star$ for $D \ge 0$, then
\begin{equation} \label{eq:saturating-impact-bounds}
\frac{Q - q_\star T}{L_0} \le \mathcal I_T(Q) \le \frac{Q}{L_0}, \qquad \frac{\mathcal I_T(Q)}{Q} \longrightarrow \frac{1}{L_0} \quad \text{as } Q \to \infty \text{ at fixed } T.
\end{equation}
Thus bounded counterflow cannot sustain sublinear asymptotic scaling at fixed duration. This bound alone does not exclude a finite intermediate interval of concavity. A specified saturating response, such as $q_\star\tanh(D/d_c)$, must be assessed separately for its finite-range behavior.

Within \eqref{eq:sqrt-range}, impact depends on the order through its average rate $Q/T$. At fixed size, with the remaining parameters fixed and both executions inside this range, it decreases as $T^{ - 1/2}$. This is a specific prediction of the fixed-horizon threshold normalization, rather than a duration-independent size law. \Cref{rem:duration-free} below shows that thresholds scaled with the execution horizon remove this dependence. Testing either version empirically requires a joint comparison of size, duration, volatility, and market-volume normalization.

After execution the counterflow drives the displacement back toward the reference. For $\mathcal A(D) = \omega D|D|$ and $D_T > 0$,
\begin{equation} \label{eq:quadratic-relaxation}
D_{T + \tau} = \frac{D_T}{1 + \omega D_T \tau/L_0},\qquad \tau\ge 0.
\end{equation}
The decay is hyperbolic rather than exponential. A counterflow that vanishes to second order at the reference price relaxes slowly, because the restoring drift weakens as the displacement shrinks.

\begin{myremark}[Duration-free thresholds]\label{rem:duration-free}
The fixed detection horizon $\tau _d$ makes the threshold scale independent of the order. Two alternatives tie it to the execution. In the first, the horizon is proportional to the duration, $\tau _d = c_\tau T$ with a constant $c_\tau > 0$. In the second, thresholds are measured against the noise accumulated since the order began, $s_t = \sigma\sqrt{c_\tau t}$. For quadratic onset with a fresh pool and a constant-rate order, define
\begin{equation} \label{eq:x-duration-free}
x = \frac{1}{\sigma L_0}\sqrt{\frac{\hat\pi(0)\,Q}{2c_\tau}} .
\end{equation}
The terminal impact is then
\begin{equation} \label{eq:duration-free}
\mathcal I_T(Q) = \sigma\sqrt{\frac{2c_\tau Q}{\hat\pi(0)}}\,\tanh x
\qquad\text{or}\qquad
\mathcal I_T(Q) = \sigma\sqrt{\frac{2c_\tau Q}{\hat\pi(0)}}\,\frac{I_1(2x)}{I_0(2x)},
\end{equation}
respectively, where $I_0$ and $I_1$ are modified Bessel functions of the first kind. The derivation is given in \cref{sec:duration-free-proof}. Neither expression contains $T$. The linear regime, the crossover size, and the square-root regime are therefore duration-free. Suppose further that the response intensity at zero threshold is proportional to the market volume rate $v$, so that $\hat\pi(0) = k_v v$ with a constant $k_v > 0$. The square-root regime then reads $\sigma\sqrt{2c_\tau/k_v}\,\sqrt{Q/v}$. This is the familiar dependence on volatility, size, and market volume, with prefactor $\sqrt{2c_\tau/k_v}$. Duration returns only at the upper edge of the regime. For the exponential density the quadratic onset requires $Q \ll q_\star T/2$, a bound in the volume the pool can absorb during the execution. The cost is a narrower regime at a given capacity. In both variants the ratio of the asymptotic bounds equals $q_\star^2T/(4c_\tau\sigma^2L_0^2c_d^2)$. This is $(T/ \tau _{\mathrm{cf}})^2$ with $\tau _{\mathrm{cf}}$ evaluated at the threshold scale reached at $T$, and it grows only linearly in $T$ because $\tau _{\mathrm{cf}}$ now grows as $\sqrt T$. The elapsed-time version also approaches the square root only algebraically, since $I_1(2x)/I_0(2x) = 1 - 1/(4x) + O(x^{-2})$. The first variant requires counterparties to know the duration of the order, and the second only its onset. Both are idealizations of the same kind as the counterfactual reference.
\end{myremark}

The two threshold conventions also shape the path during execution. With the fixed detection horizon and a fresh pool, the displacement approaches the balance $\mathcal A(D) = q$ within a time of order $L_0/(2\sqrt{\omega q})$ under quadratic onset. After this transient the path is flat, and the terminal impact depends on the rate $Q/T$. Under elapsed-time scaling, the solution in \cref{sec:duration-free-proof} gives $D_t\approx\sigma\sqrt{2c_\tau qt/\hat\pi(0)}$ once $t\gg T/x^2$. The displacement then grows as the square root of the executed volume $qt$, whatever the rate, and the terminal law is duration-free. These are the two square-root laws that \cite{DurinRosenbaumSzymanski2023} distinguish, with respect to participation rate and with respect to executed volume. Neither convention reproduces both. The numerical study retains the fixed horizon. Transferring its results to elapsed-time thresholds would require specifying the clock after completion and at the arrival of a later order, and these choices enter the recovery and probe experiments directly.

\subsection{Stochastic pool: pathwise bounds and quasi-static impact} \label{subsec:pool-bounds}

With a stochastic pool the displacement equation is no longer autonomous, but its monotone structure still gives pathwise control. For a buy order the displacement stays nonnegative, since its drift at $D = 0$ equals $q_t/L_0\ge0$. The argument of $\rho$ is then $Y_t$.

\begin{proposition}[Pathwise bounds]\label{prop:pool-bounds}
Let $q_t = q \ge 0$ on $[0,T]$, let the threshold scale $s_t$ be deterministic, and let $\mathcal A_t$ be locally Lipschitz in the displacement. Fix a path on which $\rho_{\min}\le \rho(Y_t)\le\rho_{\max}$ for $t\in[0,T]$, with $0\le \rho_{\min}\le\rho_{\max}$. For $r\ge0$ let $D^{(r)}$ solve $L_0\,\dot D^{(r)}_t = q - r\,\mathcal A_t\big(D^{(r)}_t\big)$ with $D^{(r)}_0 = 0$. Then
\begin{equation} \label{eq:pool-bounds}
D_t^{(\rho_{\max})} \le D_t \le D_t^{(\rho_{\min})}, \qquad 0\le t\le T.
\end{equation}
For $\mathcal A_t(D) = \omega D|D|$ with constant $\omega > 0$ and for $r > 0$, \cref{prop:sqrt-cf} gives
\begin{equation*}
D_T^{(r)} = \sqrt{q/(r\omega)}\,\tanh\big(\sqrt{qr\omega}\,T/L_0\big),
\end{equation*}
and $D_T^{(0)} = qT/L_0$.

\end{proposition}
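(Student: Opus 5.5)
The plan is a pathwise comparison argument for scalar ODEs. Fix the path and write the drift as $F(t,D,y) = \big(q - \rho(y)\mathcal A_t(D)\big)/L_0$. Along the path, $D$ solves $\dot D_t = F(t,D_t,Y_t)$ with $D_0=0$; well-posedness is supplied by \cref{lem:wellposed}, and $Y$ is continuous because \eqref{eq:gle-lift} has no diffusion term.

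The first step is to show that $D_t\ge0$ and $D^{(r)}_t\ge 0$ on $[0,T]$. At $D=0$ the drift equals $q/L_0\ge0$, since $\mathcal A_t(0)=0$. Because $\mathcal A_t$ is odd, for $D<0$ the drift satisfies $q - r\mathcal A_t(D) \ge q\ge 0$. Hence none of these solutions can enter the negative half-line: a solution that did would have nonnegative velocity there while starting from $0$. This justifies using $\rho(Y_t)$ as the argument of the pool map, as noted before the statement. The same reasoning gives $\mathcal A_t(D_t)\ge0$ along every path considered.

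The second step is the comparison itself. On $D\ge0$ we have $\mathcal A_t(D)\ge0$, and so
\begin{equation*}
q-\rho_{\max}\mathcal A_t(D)\le q-\rho(Y_t)\mathcal A_t(D)\le q-\rho_{\min}\mathcal A_t(D).
\end{equation*}
Thus $D$ is a subsolution of the $\rho_{\min}$ equation and a supersolution of the $\rho_{\max}$ equation, with identical initial data. I would apply the standard scalar comparison lemma, for which local Lipschitz continuity of $\mathcal A_t$ in $D$ suffices.

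Concretely, set $\Delta_t = D_t - D^{(\rho_{\min})}_t$. Then
\begin{equation*}
L_0\dot\Delta_t \le -\rho_{\min}\big(\mathcal A_t(D_t)-\mathcal A_t(D^{(\rho_{\min})}_t)\big)\le \rho_{\min}\,\ell\,|\Delta_t|,
\end{equation*}
where $\ell$ is a Lipschitz constant on a compact set containing both paths. Such a compact set exists because of the a priori bound $0\le D_t\le qT/L_0$, which follows from $\mathcal A_t\ge0$ on $D\ge 0$; the same bound holds for each $D^{(r)}$. Multiplying by $(\Delta_t)_+$ and applying Gronwall's inequality to $(\Delta_t)_+^2$, starting from $(\Delta_0)_+=0$, gives $\Delta\le 0$. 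The lower bound in \eqref{eq:pool-bounds} follows symmetrically. The edge case $\rho_{\min}=0$ needs no special treatment, because the $D^{(0)}$ equation is linear and the same estimate applies.

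The explicit formulas come last. For $r>0$ the equation $L_0\dot D = q - r\omega D|D|$ is exactly \eqref{eq:cf-ode} with $\omega$ replaced by $r\omega$ and $Q/T=q$. By \cref{prop:sqrt-cf} its solution at time $T$ is $\sqrt{q/(r\omega)}\tanh(\sqrt{qr\omega}\,T/L_0)$. In fact the proof of that proposition solves the Riccati equation $D_t = \sqrt{q/(r\omega)}\tanh(\sqrt{qr\omega}\,t/L_0)$ for all $t$, and we only need its value at $T$. For $r=0$ the equation integrates directly to $qT/L_0$.

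The main obstacle is making the comparison rigorous when $\mathcal A_t$ is only locally Lipschitz and the pool coefficient $\rho(Y_t)$ is time-varying. This is handled by the a priori confinement of all paths to $[0,qT/L_0]$ and the Gronwall estimate above, so I expect no real difficulty beyond careful bookkeeping.
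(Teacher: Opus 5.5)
Your proposal is correct and follows the paper's proof. Both arguments establish $D_t, D^{(r)}_t \ge 0$, use $\mathcal A_t \ge 0$ on $D \ge 0$ to place the drift between the $\rho_{\max}$ and $\rho_{\min}$ drifts, apply scalar ODE comparison, and obtain the closed form from \cref{prop:sqrt-cf} with $\omega$ replaced by $r\omega$. The only difference is that you prove the comparison step yourself, via a priori confinement to $[0,qT/L_0]$ and Gronwall applied to $(\Delta_t)_+^2$, whereas the paper cites the standard comparison theorem.
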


\begin{proof}
See \cref{sec:pool-bounds-proof}.
\end{proof}

The same argument bounds the relaxation after execution. For quadratic onset and $D_T > 0$,
\begin{equation} \label{eq:pool-relaxation-bounds}
\frac{D_T}{1 + \rho_{\max}\,\omega D_T\tau/L_0} \le D_{T+\tau} \le \frac{D_T}{1 + \rho_{\min}\,\omega D_T\tau/L_0}, \qquad \tau\ge0,
\end{equation}
where $\rho_{\min}$ and $\rho_{\max}$ now bound $\rho(Y_t)$ over $[T, T+\tau]$ along the path. A depleted pool slows early reversal, and its recovery accelerates later reversal. Liquidity recovery and price reversal are therefore linked in this model. With the floor of \eqref{eq:rho}, $\rho(Y_t)\ge \rho_{\mathrm{fl}}$ on every path, so the choice $\rho_{\min} = \rho_{\mathrm{fl}}$ makes the upper bounds in \eqref{eq:pool-bounds} and \eqref{eq:pool-relaxation-bounds} hold on all paths.

The bounds become sharp when the pool varies little along the path. A related approximation applies after the initial transient when displacement adjusts rapidly relative to changes in the pool and the execution rate, giving $\rho(Y_t)\,\mathcal A_t(D_t) \approx q$. The baseline counterflow scale $\tau _{\mathrm{cf}} = 0.02\tau _0$ helps locate this regime, but the local adjustment time also depends on the order rate and pool intensity. For quadratic onset and fixed thresholds this gives $D_t \approx \sqrt{q/(\omega\rho(Y_t))}$, and the terminal impact becomes
\begin{equation} \label{eq:quasi-static}
\mathcal I_T(Q) \approx \sigma\sqrt{\tau _d}\,\sqrt{\frac{2Q}{\hat\pi(0)\,T}}\;\EE\big[\rho(Y_T)^{-1/2}\big].
\end{equation}
The square-root law thus acquires the factor $\EE[\rho(Y_T)^{-1/2}]$. The latent state inherited from earlier flow and its evolution during execution determine this factor. By Jensen's inequality it is at least $\EE[\rho(Y_T)]^{-1/2}$. Depletion lowers the mean intensity, and fluctuations of the pool raise the factor further. With the duration-free thresholds of \cref{rem:duration-free}, the same factor multiplies the duration-free law.

Fast displacement adjustment does not imply that the pool has reached equilibrium or that \eqref{eq:quasi-static} is independent of the memory spectrum. The law of the forced state $Y_T$, and hence $\EE[\rho(Y_T)^{ - 1/2}]$, depends on the kernels, their weights, the initialization, and the execution history. For a frozen pool intensity $r > 0$ and constant rate $q > 0$, linearization about the balance $r\mathcal A(D_*) = q$ gives the local displacement relaxation time
\begin{equation} \label{eq:local-displacement-time}
\tau _D = \frac{L_0}{r\mathcal A'(D_*)} = \frac{L_0}{2\sqrt{r\omega q}} \qquad \text{for } \mathcal A(D) = \omega D^2,\quad D > 0.
\end{equation}
Thus a comparison with the kernel time scales must account for the execution rate and pool state, as well as the coupled latent dynamics. The ratio of \eqref{eq:tau-cf} to the shortest kernel time is not a universal criterion for spectral insensitivity.

Matching integrated order-flow strengths does provide one reason for similar limiting responses. With constant amplitudes, constant forcing, and no noise, a stationary state satisfies $h_i^* = 0$, $g_j^* = c_j^{(0)}q/\lambda _j$, and $U'(Y_*) = q\sum_j c_j^{(0)}/\lambda _j$. The matched spectra therefore have the same forced equilibrium condition, although their transients and stochastic marginal laws can differ. Conversely, for the atom-free exponential response used here, reducing $q_\star$ to zero at fixed thresholds and schedule removes the counterflow and recovers the linear benchmark for every spectrum. Spectral separation need not increase monotonically as the counterflow is slowed. For general schedules, the evolving pool and the execution shape remain coupled.

\section{Scope and design of the simulation study} \label{sec:nonlinear}

The numerical study compares the size--duration response, schedule dependence, and inherited-state effects of the model under prescribed order flow. The calculations comprise analytical and deterministic benchmarks, baseline parameter-targeting and no-order calculations, and Monte Carlo comparisons of the stochastic pool variants. A backward PINN formulation, verified against Monte Carlo at the baseline, is recorded in \cref{app:pinn}.

\subsection{Nonlinear potential} \label{subsec:simulation-potential}

The restoring force of the potential \eqref{eq:quartic} is
\begin{equation} \label{eq:simulation-force}
U'(y) = u_2 y + u_3 y^2 + u_4 y^3.
\end{equation}
The coefficient $u_2$ determines the local curvature, $u_3$ introduces asymmetry, and $u_4$ provides quartic confinement. Linear latent dynamics driven by Gaussian noise have a Gaussian stationary marginal whenever a stationary law exists. Changing the linear memory spectrum alone does not produce a non-Gaussian marginal. The baseline therefore retains symmetric quartic confinement, $u_3 = 0$ and $u_4 > 0$, while the quadratic case is used for analytical checks. No Boltzmann form for the stationary law is assumed.

\subsection{Controlled comparison of the variants} \label{subsec:simulation-hierarchy}

The four variants of \cref{sec:hierarchy} are compared under common schedules and the finite-start initialization convention. The potential, threshold measure, depth, and pool map are unchanged whenever they remain active. Spectral comparisons preserve integrated kernel strengths, as specified in \cref{sec:powerlaws}, and the modal initial laws follow the corresponding spectra. Parameters are not recalibrated after a component is removed. All reported pool comparisons use constant cross-memory amplitudes $c_j(y) = c_j^{(0)}$ and the symmetric baseline potential.

The principal outputs are the expected price response, terminal impact, its local exponent, mean pool intensity, and accumulated counterflow. Maxima of expected trajectories are distinguished from expectations of pathwise maxima. For a prescribed order $q$, define the response relative to an otherwise identical no-order calculation by
\begin{equation} \label{eq:paired-impact}
J_q(t) = \EE\!\left[X_t^{[q]}\right] - \EE\!\left[X_t^{[0]}\right].
\end{equation}
Both expectations use the same initial-law convention and model parameters. For an isolated order started at zero displacement, the no-order displacement and counterflow vanish identically, so $J_q(t) = \EE[D_t^{[q]}]$ and $J_q(T)$ coincides with \eqref{eq:impact-QT}. Control subtraction remains useful for latent-state diagnostics and for incremental responses following a prior order. Common random numbers reduce sampling variability when these differences are estimated by simulation.

\subsection{Parameter scales and experimental specification} \label{subsec:simulation-scales}

Let $\tau _0$ be the reference time, $p_0$ the log-price displacement scale, and $L_0$ the displayed depth, with $Q_0 = L_0p_0$. Sizes, durations, relaxation rates, and counterflow intensity are reported as $Q/Q_0$, $T/\tau _0$, $\gamma_i\tau _0$ and $\lambda_j\tau _0$, and $q_\star\tau _0/Q_0$. The latent-state normalization, baseline parameters, and initial law are specified in \cref{subsec:baseline-parameterization}.

The baseline uses two intrinsic-memory modes and two order-flow modes. Ambient volatility is fixed during each execution, $\sigma_t \equiv \sigma_X$, so it enters the displacement dynamics through the threshold scale \eqref{eq:dc-vol}. Fixing ambient volatility isolates the pool dynamics.

The displayed impact curves and the band-width table use $81$ logarithmically spaced order sizes $V = |Q|$ on $10^{ - 4} \le V/Q_0 \le 10^4$. The curves are shown for $T/\tau _0 \in \{1,3,10\}$, and the table uses $T/\tau _0 \in \{0.1,0.3,1,3,10,30\}$. Schedule comparisons use $Q = Q_0$, $T = \tau _0$, and pause fraction $\kappa = 0.3$. The baseline inherited-state comparison uses prior and probe orders of size $Q_0$ and duration $\tau _0$, with three short gaps. The additional comparison in \cref{sec:inherited-depletion} increases the prior size to $3Q_0$ and $5Q_0$, keeps the probe and both durations fixed, and uses nine gaps from zero to $50\tau _0$. The broad-spectrum specification has $N = M = 4$ and $R_\gamma = R_\lambda = 1000$.

The pool-variant curves and the stochastic band-width estimates reported in \cref{sec:experiments} use $2048$ paths and a time step of $0.01\tau _0$. The Kyle response is analytical, and the fresh-pool response is obtained from \eqref{eq:cf-ode}. The quadratic-onset closed form \eqref{eq:sqrt-cf} is a separate benchmark and is not substituted for the exponential-threshold response in the numerical comparisons.

\section{Numerical method} \label{sec:numerical-method}

The finite-exponential kernels provide an exact Markovian representation of the model. The reported stochastic comparisons use direct Monte Carlo simulation of this lifted system, while the fresh-pool benchmark is computed by ordinary differential equation integration. \Cref{app:pinn} gives backward equations for the same conditional expectations and accumulated functionals, with a PINN approximation verified against Monte Carlo at the baseline.

The Monte Carlo implementation uses a drift-implicit midpoint scheme, with schedule switches included as integration boundaries. Displacement and accumulated counterflow use the same midpoint counterflow, preserving discrete volume balance up to the nonlinear-solver tolerance. Each midpoint solve uses a relative root tolerance of $10^{-11}$ and at most $60$ nonlinear iterations. The baseline simulation settings and the verification checks are reported in \cref{subsec:baseline-parameterization,subsec:test-verification}.

\subsection{Lifted numerical system} \label{subsec:numerical-lift}

For a prescribed deterministic order schedule $q$, let the augmented state be
denoted by $\widetilde{\mathbf Z}_t = (D_t, Y_t, h_{1,t}, \ldots, \allowbreak h_{N,t},g_{1,t}, \ldots, g_{M,t})$. The lifted dynamics can be written in the form
\begin{align} \label{eq:numerical-displacement}
dD_t &= b_q(t,\widetilde{\mathbf Z}_t) \, dt, \\
dY_t &= F(t,\widetilde{\mathbf Z}_t) \, dt, \nonumber \\
dh_{i,t} &= \left[F(t,\widetilde{\mathbf Z}_t) - \gamma _i h_{i,t}\right] \, dt - \sigma _{Y,i} \, dW_{i,t}, \qquad i = 1,\ldots,N, \nonumber \\
dg_{j,t} &= \left[ - \lambda _j g_{j,t} + c_j(Y_t) q_t\right] \, dt, \qquad j = 1,\ldots,M. \nonumber
\end{align}
Here, $F$ denotes the latent-state drift obtained from the Markovian representation of the generalized Langevin equation, while $b_q = \big[q_t - \rho(\operatorname{sgn}(D)\,y)\,\mathcal A_t(D)\big]/L_0$ is the displacement drift containing the prescribed order and the counterflow modulated by the latent pool. Because the price and the reference share the same diffusion \eqref{eq:reference}, the price noise cancels in the displacement and does not enter the lifted system. Its correlation with the latent-liquidity noise affects the joint law of the price and the latent state, but not the order-induced responses under the fixed-volatility specification used below.

When $c_j$ is constant and $g_{j,0}$ is deterministic, the corresponding order-flow memory state can be evaluated analytically,
\begin{equation} \label{eq:numerical-deterministic-flow-memory}
g_j(t) = e^{ - \lambda _j t} g_{j,0} + c_j \int_0^t e^{ - \lambda _j(t - s)} q_s \, ds.
\end{equation}
In this case $g_j$ is a deterministic function of time.

\subsection{Common computational protocol} \label{subsec:computational-protocol}

Each comparison specifies the model vector
\begin{equation} \label{eq:parameter-vector}
\Theta = \left(U, L_0, \rho, \{a_i,\gamma_i\}_{i = 1}^N, \{c_j,\lambda_j\}_{j = 1}^M, \hat\Pi, \tau_d, \sigma_X, \sigma_Y, \nu_0\right),
\end{equation}
together with the order schedule, execution horizon, observation horizon, and numerical settings. Only the components identified as controls are changed. The reported Monte Carlo estimates are formed from the simulated response functionals. Order and control calculations use the same parameter values and initialization convention, with common random numbers where paired differences are estimated by simulation.

Primary outputs are impact, pool intensity, and accumulated counterflow. Local exponents, execution-weighted displacement, relaxation ratios, and crossing times are then computed from these outputs. \Cref{subsec:test-verification} describes the analytical benchmarks and numerical checks.

\subsection{Baseline parameterization} \label{subsec:baseline-parameterization}

The baseline parameters are fixed by stated targets in the reference units of \cref{subsec:simulation-scales} before any impact curve is computed. \Cref{tab:baseline-parameters} collects the values and the basis of each choice. Calibration to market data is deferred to the companion study.

The reference units are $\tau _0 = p_0 = L_0 = 1$, so that $Q_0 = L_0 p_0 = 1$. The depth $L_0$ is the displayed depth of \eqref{eq:price}. The price scale is tied to volatility through $p_0 = \sigma _X\sqrt{ \tau _0}$, which gives $\sigma _X = 1$. Impact is then measured in units of price volatility over the reference time, the normalization used in empirical studies of the square-root law. The latent state is centered at zero, the minimum of $U$, and its unit is fixed by the cross-memory targeting below.

The potential has unit curvature, $u_2 = 1$, so that without memory the latent state relaxes on the time scale $\tau _0$. The quartic coefficient $u_4 = 0.1$ supplies a confining force equal to one tenth of the linear restoring force at $|y| = 1$. The baseline sets $u_3 = 0$ so that buy and sell orders are treated alike.

The memory rates are placed so that the duration grid of \cref{subsec:test-concavity} lies below, within, and above them. The intrinsic rates are $\gamma _i \tau _0 \in \{1, 0.1\}$ and the order-flow rates are $\lambda _j \tau _0 \in \{2, 0.2\}$. Distinct intrinsic and order-flow rates avoid coincidences that are not part of the model. Both spectra have breadth $R_ \gamma = \gamma _{\max}/ \gamma _{\min} = 10$ and $R_ \lambda = \lambda _{\max}/ \lambda _{\min} = 10$. The intrinsic weights give each mode an equal share of the integrated strength $S_{YY}^{\mathrm{int}} = \sum_i a_i/ \gamma _i = 1$. The cross-memory amplitudes have equal ratios $c_j^{(0)}/ \lambda _j$. Their common scale is fixed by one condition: a constant-rate order of size $Q_0$ executed over $\tau _0$ displaces the latent state by one unit at the end of execution in the noise-free lifted dynamics. Solving this condition gives $c^{(0)} = (4.94, 0.494)$, which we round to $(5, 0.5)$. The rounded values give a displacement of $1.01$ units. This choice locates the onset of depletion near $Q_0$. It does not determine the local impact exponent, the width of any approximately square-root regime, or their dependence on duration.

The noise scale $\sigma _Y$ is chosen so that the pool intensity fluctuates by about eight percent in the absence of an order. The stationary covariance of the linearized no-order lifted system gives a standard deviation of $Y$ equal to $0.48\, \sigma _Y$, corresponding to a variance of $0.23\sigma_Y^2$ against the Gibbs value $\sigma_Y^2$ of \cref{sec:lift}. With $\sigma _Y = 1$ and the quartic term included, a Monte Carlo simulation of the no-order dynamics gives a stationary standard deviation of $Y$ of $0.46$. The ambient volatility enters the order-induced responses only through the threshold scale \eqref{eq:dc-vol}, because the price noise cancels in the displacement.

The pool map is \eqref{eq:rho} with $y_\rho = 2$ and floor $\rho_{\mathrm{fl}} = 0.3$. A constant-rate order of size $Q_0$ over $\tau _0$ then lowers the opposing pool to $\rho(1.01) = 0.83$ at the end of execution in the noise-free dynamics. Under the no-order law with $\sigma _Y = 1$, the pool intensity $\rho(Y_t)$ has mean one and a stationary standard deviation of $0.08$. The mean equals one at all times, because the no-order law of $Y_t$ is symmetric and $\rho(y) + \rho(-y) = 2$. The counterflow uses the exponential density with $q_\star = 100$ and no atom, $\pi _0 = 0$. The detection horizon is the reference time, $\tau _d = \tau _0$, and $c_d = 1$, so that \eqref{eq:dc-vol} gives $d_c = \sigma _X\sqrt{ \tau _0} = p_0 = 1$. The threshold scale thus coincides with the price unit by construction rather than by coincidence. The quadratic onset follows from the selected threshold density and linear individual response. The counterflow time scale \eqref{eq:tau-cf} is then $\tau _{\mathrm{cf}} = 2L_0d_c/q_\star = 0.02\,\tau _0$, far below the memory time scales. This separation permits rapid displacement adjustment for suitable order rates and horizons, as discussed in \cref{subsec:pool-bounds}. It does not imply that the pool has equilibrated. The square-root regime occurs at horizons overlapping the memory time scales. The duration grid runs from $0.1\,\tau _0$ to $30\,\tau _0$.

The initial law $\nu _0$ is the finite-start law of \cref{prop:lift}, with $Y_0 = 0$, stationary Gaussian modes $\eta _{i,0}$, $h_{i,0} = - \eta _{i,0}$, and $g_{j,0} = 0$. Symmetry keeps the no-order mean pool intensity equal to one at all times under this law. Initialization can nevertheless affect higher moments, temporal dependence, and the joint law of pool intensity and displacement, and therefore the expected impact. Its standard deviation rises from $0.06$ at $t = \tau _0$ to its stationary value $0.08$ by $t = 5\,\tau _0$.

\begin{table}[!htb]
\centering
\footnotesize
\begin{tabular}{llll}
\hline
Block & Parameter & Baseline value & Basis of choice \\
\hline
Units & $\tau _0,p_0,L_0,Q_0$ & $1,\ 1,\ 1,\ 1$ & Units, $p_0 = \sigma _X\sqrt{ \tau _0}$ \\
Potential & $u_2,\ u_3,\ u_4$ & $1,\ 0,\ 0.1$ & Unit curvature, symmetric \\
Intrinsic memory & $a_i,\ \gamma _i$ & $(0.5, 0.05)$, $(1, 0.1)$ & $S_{YY}^{\mathrm{int}} = 1$, $R_ \gamma = 10$ \\
Cross memory & $c_j^{(0)},\ \lambda _j$ & $(5, 0.5)$, $(2, 0.2)$ & $Q_0$ over $\tau _0$ moves $Y$ by one unit \\
Latent pool & $\rho,\ y_\rho,\ \rho_{\mathrm{fl}}$ & \eqref{eq:rho}, $2$, $0.3$ & $\rho(1.01) = 0.83$ \\
Counterflow & $q_\star,\ c_d,\ \tau _d,\ \pi _0$ & $100,\ 1,\ 1,\ 0$ & $d_c = p_0$, $\tau _{\mathrm{cf}} = 0.02\,\tau _0$ \\
Noise & $\sigma _X,\ \sigma _Y$ & $1,\ 1$ & Price unit, $\rho(Y)$ varies by $8\%$ \\
Initialization & $\nu _0$ & Finite-start law & \Cref{prop:lift}, $Y_0 = 0$ \\
\hline
\end{tabular}
\caption{Baseline model specification for the reported numerical comparisons, informed by parameter-targeting and no-order calculations. All quantities are in the reference units $\tau _0 = p_0 = L_0 = Q_0 = 1$.}
\label{tab:baseline-parameters}
\end{table}

The noise-free cross-memory target gives the unrounded amplitudes $(4.94,0.494)$ for a terminal latent displacement of one unit. The rounded values $(5,0.5)$ used in the table give $1.01$. The linearized stationary covariance is obtained from the continuous Lyapunov equation for the $(Y,h)$ block. The nonlinear no-order Monte Carlo calculation uses a drift-implicit midpoint scheme, a time step of $0.02\tau _0$, $2048$ paths, and a burn-in of $100\tau _0$. These are the settings for the stationary pool statistics, distinct from the $0.01\tau _0$ step used for the reported impact comparisons. Deterministic benchmarks use high-order ODE integration.

\subsection{Verification and numerical error assessment} \label{subsec:test-verification}

Analytical benchmarks and independent numerical comparisons check the expected impact, latent-state covariance, nonlinear counterflow, and execution cost.

First, the Kyle variant without counterflow, $\Pi = 0$, gives a direct test of expected impact. Its baseline-subtracted displacement satisfies
\begin{equation} \label{eq:numerical-constant-liquidity-check}
J_q(H) = \frac{1}{L_0} \int_0^{\min(H,T)} q_s \, ds.
\end{equation}
\Cref{app:pinn} uses it to test the backward solver and the initial-state averaging.

Second, for a quadratic potential the no-order lifted system is linear. Its stationary covariance solves a continuous Lyapunov equation, which gives the linearized variance of $Y$ in \cref{subsec:baseline-parameterization}.

Third, the fresh pool reduces to the ordinary differential equation \eqref{eq:cf-ode}. Its solution provides an independent comparison for nonlinear counterflow during execution and relaxation. The analytical small-order benchmark is checked by reducing the schedule amplitude while holding its shape fixed. Deterministic relaxation formulas, where applicable, and the counterflow bounds provide additional controls. For the stochastic pool, the pathwise bounds \eqref{eq:pool-bounds} and \eqref{eq:pool-relaxation-bounds} must hold on every simulated path, with the realized range of $\rho(Y_t)$.

Fourth, the one-sided form of the cost identity \eqref{eq:roundtrip-identity} provides a check on the execution-weighted displacement of \cref{sec:schedules}.

\myparagraph{Numerical accuracy.}
The finite-exponential lift is exact for the specified kernels. Changing the
modes changes the model unless a separate target kernel is being approximated.
Monte Carlo estimates retain sampling and time-discretization errors. Paired
standard errors account for covariance in differences and ratios at a fixed
time step. Time-step sensitivity was checked at $dt / \tau _0 = 0.01$,
$0.005$, $0.0025$, and $0.00125$ for an isolated flat order, the interrupted
schedule, and an inherited-state probe, all using the baseline GLE pool.
For the isolated and interrupted orders, both with $Q = Q_0$ and
$T = \tau _0$, changes in terminal response between successive settings
are below the reported paired standard errors of the terminal estimates.
The runs reuse a random seed but do not couple Brownian increments across
time steps, so these differences contain sampling variation as well as
discretization effects.

The inherited-state test uses a flat prior order of size
$Q_{\mathrm{prior}} = 5Q_0$, followed at zero gap by a flat probe of size
$Q_0$, each of duration $\tau _0$. The ratio of mean incremental probe impact
with and without the prior order, minus one, changes by
$4.2\times10^{-4}$ ($0.042$ percentage points) between
$dt = 0.01\,\tau _0$ and $dt = 0.0025\,\tau _0$. The change between
$0.0025\,\tau _0$ and $0.00125\,\tau _0$ is $7\times10^{-6}$
($0.0007$ percentage points). This zero-gap refinement check does not confirm time-step convergence of the long-gap results.

\section{Numerical experiments} \label{sec:experiments}

The numerical results compare the size and duration dependence of impact, the effect of execution timing on depletion and recovery, and the response to an order placed after earlier trading. The baseline includes rapid counterflow, a positive pool floor, and separate intrinsic and order-flow memory spectra. The common computational protocol and verification procedures are given in \cref{subsec:computational-protocol,subsec:test-verification}.

The baseline is symmetric under simultaneous reversal of the order, displacement, latent state, and memory variables. The figures therefore display buy orders. Sell orders provide a symmetry check for this specification. Asymmetric potentials or state-dependent cross-memory require a comparison of both signs.

\begin{table}[!htb]
\centering
\small
\begin{tabular}{p{0.23\textwidth}p{0.68\textwidth}}
\hline
Comparison & Settings and observables \\
\hline
Impact and concavity & Four variants on the $81$-point size grid, with $T / \tau _0 \in \{1,3,10\}$ in the figure; terminal impact and local exponents. \\
Schedule and recovery & Four GLE schedules with common size and duration; execution and post-execution displacement, pool intensity, counterflow, and normalized recovery. \\
Inherited state & Unit-size probe and unit-duration orders; the baseline uses a prior size $Q_0$, while the additional comparison varies prior size over $Q_0$, $3Q_0$, and $5Q_0$ for three spectra and nine gaps. The fresh-pool reference is retained for the baseline decomposition. \\
\hline
\end{tabular}
\caption{Organization of the numerical comparisons. The short-gap inherited-state plot is distinguished from the extended comparisons across spectra and prior-order sizes.}
\label{tab:compact-study}

\end{table}

The stochastic results in the figures use Monte Carlo estimates with $2048$ paths and a time step of $0.01 \tau _0$ at the baseline of \cref{tab:baseline-parameters}.

\subsection{Impact, concavity, and execution duration} \label{subsec:test-concavity}

The first comparison uses constant-rate orders,
\begin{equation} \label{eq:const-rate}
q_t(Q,T) = \frac{Q}{T} \ind_{[0,T]}(t).
\end{equation}
The Kyle benchmark removes counterflow, the fresh pool removes depletion, and the single-mode and GLE pools retain depletion with different relaxation spectra. Their terminal and peak expected responses are
\begin{align} \label{eq:signed-response}
I_\varepsilon(V,T) &= \varepsilon J_{q(\varepsilon V,T)}(T), \qquad \varepsilon \in \{ - 1,1\}, \\
I_\varepsilon^{\max}(V,T) &= \max_{0 \le t \le T} \varepsilon J_{q(\varepsilon V,T)}(t). \nonumber
\end{align}
For the buy orders displayed below, the subscript $\varepsilon = 1$ is suppressed. The peak is the maximum of the expected response, rather than the expectation of a pathwise maximum.

For positive terminal impact, the local exponent and its centered approximation on a logarithmic size grid are
\begin{align} \label{eq:numerical-local-exponent}
\delta(V,T) &= \frac{\partial \log I(V,T)}{\partial \log V}, \\
\widehat\delta(V_k,T) &= \frac{\log I(V_{k + 1},T) - \log I(V_{k - 1},T)}{\log V_{k + 1} - \log V_{k - 1}}. \nonumber
\end{align}
Concavity is assessed separately through the marginal slopes,
\begin{equation} \label{eq:impact-discrete-slope}
s_k(T) = \frac{I(V_{k + 1},T) - I(V_k,T)}{V_{k + 1} - V_k},
\end{equation}
which must be nonincreasing within numerical uncertainty. A logarithmic slope below one does not by itself imply concavity.

\begin{figure}[!htb]
\centering
\IfFileExists{figure_1_impact.pdf}{\includegraphics[width=\textwidth]{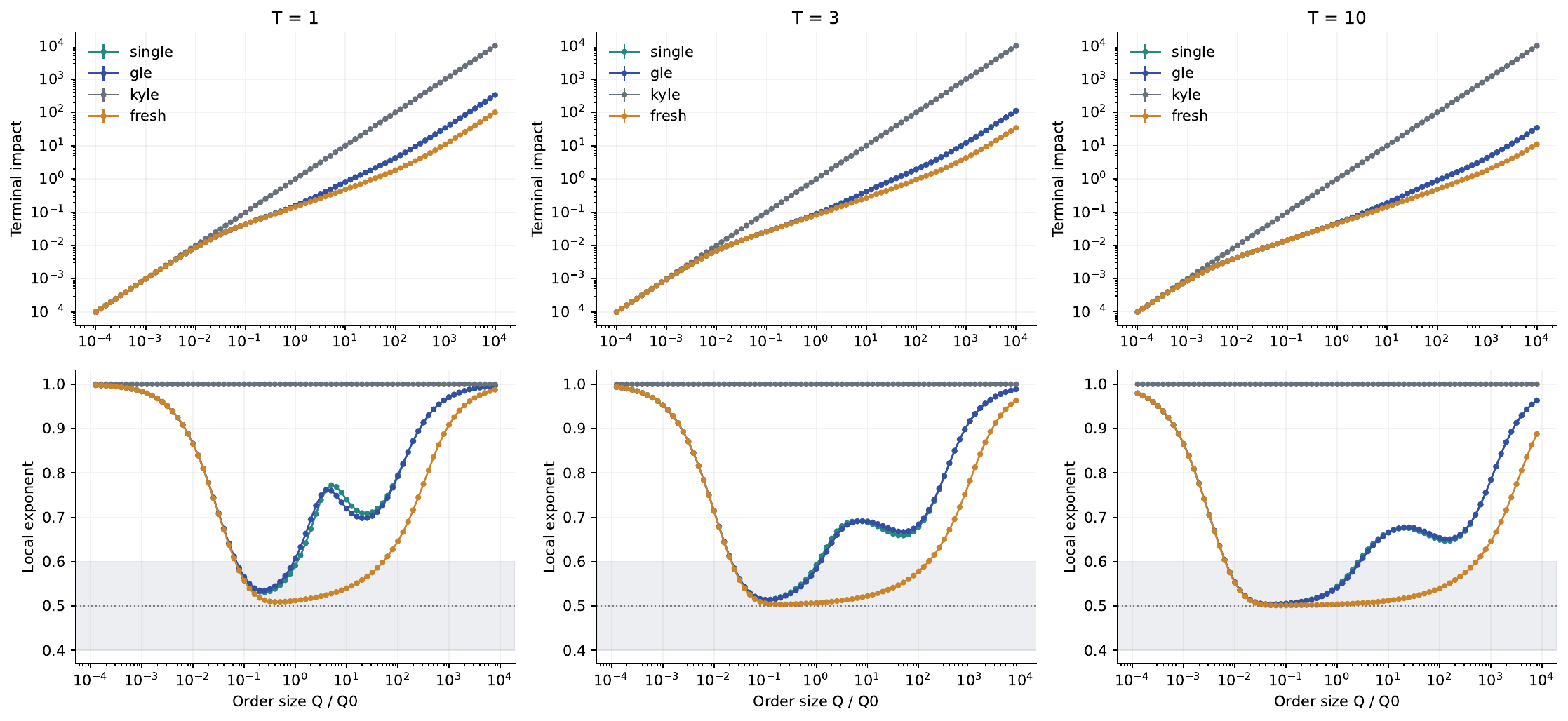}}{\fbox{\parbox[c][0.22\textheight][c]{0.92\textwidth}{\centering Figure placeholder: \texttt{figure\_1\_impact.pdf}}}}
\caption{Terminal impact and local size exponent for the Kyle, fresh-pool, single-mode, and GLE variants on $81$ logarithmically spaced sizes from $10^{-4}Q_0$ to $10^{4}Q_0$. The columns correspond to $T/\tau_0 = 1$, $3$, and $10$. Upper panels show terminal impact. Lower panels show the centered logarithmic slope $\widehat\delta$ of \eqref{eq:numerical-local-exponent}. The dotted line marks the square-root exponent $1/2$, and the shaded band is $[0.4,0.6]$. The fresh-pool curve solves \eqref{eq:cf-ode}, and the Kyle line is analytical. The single-mode and GLE curves are Monte Carlo estimates with $2048$ paths and a time step of $0.01\,\tau_0$.}
\label{fig:impact}
\end{figure}

\Cref{fig:impact} separates the Kyle line from the three variants with counterflow. The Kyle response is linear, $I(V,T) = V / L_0$, and independent of duration.
The fresh-pool response is lower and grows sublinearly over the displayed range. Its centered logarithmic slope at $Q_0$ from the dense grid is $0.51$ at $T = \tau _0$.
Increasing duration lowers the fresh-pool impact at a fixed size, consistently with the additional time available for counterflow to absorb the order.

Below about $0.1\,Q_0$ the single-mode and GLE impacts lie close to each other and to the fresh-pool response, because a small order barely depletes the pool. At larger sizes they rise above it. For $Q = 10Q_0$ and $T = 0.5 \tau _0$ the GLE and single-mode impacts are $1.23 p_0$ and $1.13 p_0$, a relative difference of approximately $9\%$, against $0.71 p_0$ for the fresh pool. For $T = \tau _0$ they are $0.82 p_0$ and $0.80 p_0$, against $0.48 p_0$. Depletion weakens the counterflow for larger orders and steepens the curve. On the dense grid the depleted-pool exponent has a local maximum, about $0.76$ near $4\,Q_0$ at $T = \tau_0$, where the pool intensity falls fastest with order size. It falls back to about $0.70$ near $20\,Q_0$, where the intensity is close to its floor, before the linear large-order limit takes over. The marginal slopes remain nonincreasing throughout, so impact stays concave.

Without the floor, depletion steepens the curve instead. In noise-free
calculations with $\rho_{\mathrm{fl}} = 0$, the largest local exponent grows
with the horizon, from about $1.86$ at $T = \tau_0$ through $2.53$ near
$5\,\tau_0$ and $2.79$ at $10\,\tau_0$ to about $3.07$ at $20\,\tau_0$.
With $\rho_{\mathrm{fl}} = 0.3$ it stays at or below $1.0$ for
$T \ge 0.3\,\tau_0$.

For the stochastic baseline, the centered slopes at $Q_0$ from the dense grid are $0.59$ and $0.61$ at $T = \tau_0$ for the single-mode and GLE pools, against $0.51$ for the fresh pool. The single-mode value lies inside the stated band $[0.4,0.6]$, while the GLE value lies just above it.

At $Q = Q_0$ and $T = \tau _0$ the GLE impact is $0.158 p_0$, below the pathwise upper bound of $0.269 p_0$ that \cref{prop:pool-bounds} gives with $\rho_{\min} = \rho_{\mathrm{fl}}$. The six-duration comparison at $Q = Q_0$ in \cref{tab:impact-benchmark} gives relative GLE--single differences of at most
$1.6\%$ in magnitude, with paired standard errors of $0.01\%$--$0.13\%$. The
positive differences at short durations and the sign change between $T = \tau _0$
and $T = 3\tau _0$ are resolved above the paired sampling noise. The $T = 30\tau
_0$ difference of $- 0.17\%$ is comparable to its standard error of $0.12\%$ and
is not resolved at the reported path count. This limited sensitivity at unit
order size does not extend uniformly to larger orders.

\begin{table}[!htb]
\centering
\small
\begin{tabular}{lrrrr}
\hline
$T/\tau _0$ & $I_T^{\mathrm{fresh}}/p_0$ & $I_T^{\mathrm{single}}/p_0$ & $I_T^{\mathrm{GLE}}/p_0$ & Difference (\%) \\
\hline
0.1 & $0.465086$ & $0.468880$ $(3.6\times10^{-5})$ & $0.471458$ $(4.7\times10^{-5})$ & $+ 0.55$ $(0.01)$ \\
0.3 & $0.269390$ & $0.277172$ $(7.1\times10^{-5})$ & $0.281566$ $(9.2\times10^{-5})$ & $+ 1.59$ $(0.04)$ \\
1 & $0.144834$ & $0.155852$ $(9.7\times10^{-5})$ & $0.158264$ $(1.1\times10^{-4})$ & $+ 1.55$ $(0.10)$ \\
3 & $0.082776$ & $0.090183$ $(7.0\times10^{-5})$ & $0.089447$ $(7.8\times10^{-5})$ & $- 0.82$ $(0.12)$ \\
10 & $0.045057$ & $0.046978$ $(4.0\times10^{-5})$ & $0.046855$ $(4.3\times10^{-5})$ & $- 0.26$ $(0.13)$ \\
30 & $0.025932$ & $0.026375$ $(2.1\times10^{-5})$ & $0.026330$ $(2.3\times10^{-5})$ & $- 0.17$ $(0.12)$ \\
\hline
\end{tabular}
\caption{Terminal impact at $Q = Q_0$ for the fresh, single-mode, and GLE
pools. The fresh-pool values come from the deterministic ODE
\eqref{eq:cf-ode} and carry no sampling error. The single-mode and GLE
values are Monte Carlo estimates with $2048$ paths and a time step of
$0.01\tau _0$. Their paired standard errors are given in parentheses. The
final column is $100(I_T^{\mathrm{GLE}}/I_T^{\mathrm{single}} - 1)$, with
its paired delta-method standard error in parentheses. The two Monte Carlo
variants share the same random innovations, so the ratio SE is a paired
estimate. The analytical Kyle benchmark is $I_T/p_0 = 1$ at every duration.}
\label{tab:impact-benchmark}

\end{table}

The dense grid also resolves the dependence on duration. The size--duration specification covers $10^{ - 4} Q_0 \le V \le 10^4 Q_0$ and $T / \tau _0 \in \{0.1,0.3,1,3,10,30\}$. The band estimates below use the same $81$ logarithmically spaced sizes as the dense comparison. An approximately square-root interval $[V_-,V_+]$ satisfies
\begin{equation} \label{eq:powerlaw-operational-criterion}
\left| \delta(V_k,T) - \frac{1}{2} \right| \le \epsilon_{\mathrm{sq}}, \qquad \epsilon_{\mathrm{sq}} = 0.1,
\end{equation}
throughout a connected range with positive increasing impact and nonincreasing marginal slopes. Its width and mean exponent are
\begin{equation} \label{eq:powerlaw-width}
W(T) = \log_{10}\frac{V_+}{V_-}, \qquad \bar\delta(T) = \frac{1}{N_{\mathrm{band}}} \sum_{V_k \in [V_-,V_+]} \delta(V_k,T).
\end{equation}
The reported band criterion additionally requires $W(T) \ge 1$, with $W$ measured in decades, that is, in units of $\log_{10}(V_+/V_-)$. A shorter interval is recorded as a crossover. Band boundaries are located on the stated size grid.

\begin{table}[!htb]
\centering
\small
\begin{tabular}{lrrrrrr}
\hline
$T / \tau _0$ & $0.1$ & $0.3$ & $1$ & $3$ & $10$ & $30$ \\
\hline
Fresh-pool width  & -- & $1.70$ & $2.80$ & $3.70$ & $4.80$ & $5.70$ \\
Single-mode width & -- & --     & $1.10$ & $1.60$ & $2.50$ & $3.40$ \\
GLE width         & -- & --     & $1.00$ & $1.60$ & $2.50$ & $3.40$ \\
\hline
\end{tabular}
\caption{Widths of intervals satisfying the square-root exponent tolerance, $\epsilon_{\mathrm{sq}} = 0.1$, at the baseline, in decades. All three rows use the same $81$-point size grid on $10^{-4} \le V/Q_0 \le 10^4$ and the same six durations. The fresh row is from the deterministic ODE \eqref{eq:cf-ode}. The single-mode and GLE rows are Monte Carlo estimates with $2048$ paths and a time step of $0.01\,\tau_0$. Entries marked `--` are cases where no connected interval satisfied both the exponent tolerance and the one-decade minimum-width criterion. They are not missing calculations.}
\label{tab:impact-bands}

\end{table}

The widths in \cref{tab:impact-bands} increase with execution duration. For the fresh pool the one-decade criterion is first met at $T = 0.3\,\tau_0$, where the band spans $1.70$ decades. Between the grid durations, it is first met near $T = 0.15\,\tau_0$. The depleted pools first meet it at $T = \tau_0$, with bands of $1.00$ to $1.10$ decades. Depletion narrows the interval by 1.7 to 1.8 decades at $T = \tau_0$ and by 2.3 decades at $T = 30\,\tau_0$. For the duration-free thresholds of \cref{rem:duration-free} with $c_\tau = 1$, the fresh-pool band spans $2.80$ decades for $\tau_d = c_\tau T$, the fixed-horizon value at this duration, and $2.70$ decades under elapsed-time scaling at $T = \tau_0$. At $T = 10\,\tau_0$ both span $3.80$ decades, against $4.80$ for the fixed horizon. The single-mode and GLE widths are $1.10$ and $1.00$ decades at $T = \tau_0$ and coincide at the reported precision for the longer durations. These observations suggest that the counterflow determines the principal concave response, while depletion limits its range. The similar widths indicate limited sensitivity of this grid statistic to the two matched spectra.

The $81$-point size grid has spacing $0.1$ decade. Cubic interpolation of
these same mean impacts in logarithmic size and impact coordinates onto
$241$ logarithmically spaced sizes gives widths of $1.17$ and $1.67$
decades for the single-mode pool at $T = \tau _0$ and $T = 3\tau _0$, and
$1.10$ and $1.70$ decades for the GLE pool at the same durations. All four
widths remain above one decade under this interpolation. The single-mode
width exceeds the GLE width by about $0.07$ decade at $T = \tau _0$. At
$T = 3\tau _0$ they differ by one interpolated interval, about $0.033$
decade. No additional order sizes were simulated, so this check shows
sensitivity to interpolation rather than convergence under size-grid
refinement.

A separate $500$-draw sensitivity check perturbs each mean independently
by a zero-mean Gaussian variable with standard deviation equal to its reported
standard error and recomputes the bands on the original $81$-point grid.
The conditional width summaries remain unchanged at the reported precision.
The procedure omits draws with no qualifying band and ignores correlations
across order sizes induced by the common random numbers, so it does not
prove robustness of the one-decade classification to sampling uncertainty.

\subsection{Schedule dependence and post-execution recovery} \label{sec:schedules}

The second comparison holds order size and elapsed execution duration fixed and changes the timing of execution. The reference setting is $Q = Q_0$ and $T = \tau _0$. Writing $q_t = Q \psi_T(t)$ with $\int_0^T \psi_T(s) \, ds = 1$, the flat, front-loaded, and back-loaded profiles are
\begin{equation} \label{eq:schedule-comparison}
\psi_T^{\mathrm{flat}}(s) = \frac{1}{T}, \qquad \psi_T^{\mathrm{front}}(s) = \frac{2(T - s)}{T^2}, \qquad \psi_T^{\mathrm{back}}(s) = \frac{2s}{T^2}, \qquad 0 \le s \le T.
\end{equation}
The interrupted schedule contains a symmetric pause with fraction $\kappa = 0.3$,
\begin{equation} \label{eq:pause-profile}
\psi_{T,\kappa}^{\mathrm{pause}}(s) = \frac{\ind_{[0,(1 - \kappa)T/2]}(s) + \ind_{[(1 + \kappa)T/2,T]}(s)}{(1 - \kappa)T}.
\end{equation}
The pause therefore occupies $[0.35T,0.65T]$. Since elapsed duration and size are fixed, the active rate exceeds that of the flat schedule. This comparison includes both the interruption and the compensating increase in the active rate.

\begin{table}[!htb]
\centering
\small
\begin{tabular}{lccccccccc}
\hline
Schedule & $J_q(T)$ & $\mathrm{SE}$ & Peak $J_q$ & $C[q]$ & $\mathrm{SE}$ & Cum.\ cf & $\mathrm{SE}$ & $\tau_{0.5}$ & $\tau_{0.1}$ \\
\hline
flat & $0.1581$ & $1.1\times10^{-4}$ & $0.1581$ & $0.1352$ & $5.7\times10^{-5}$ & $0.8419$ & $1.1\times10^{-4}$ & $0.161$ & $1.389$ \\
front & $0.0792$ & $7.0\times10^{-5}$ & $0.1835$ & $0.1487$ & $4.5\times10^{-5}$ & $0.9209$ & $7.0\times10^{-5}$ & $0.319$ & $2.640$ \\
back & $0.2152$ & $1.5\times10^{-4}$ & $0.2152$ & $0.1546$ & $8.0\times10^{-5}$ & $0.7848$ & $1.5\times10^{-4}$ & $0.117$ & $1.043$ \\
interrupted & $0.1875$ & $1.3\times10^{-4}$ & $0.1875$ & $0.1441$ & $5.2\times10^{-5}$ & $0.8125$ & $1.3\times10^{-4}$ & $0.135$ & $1.179$ \\
\hline
\end{tabular}
\caption{Schedule comparison and recovery in the GLE pool at $Q = Q_0$, $T =
\tau_0$, with $2048$ paths and a time step of $0.01\,\tau_0$. Terminal and peak
impacts are in units of $p_0$. The SE columns give paired standard errors of the
per-path order-minus-control differences for the three quantities that are sample
means: terminal impact $J_q(T)$, execution-weighted displacement $C[q]$, and
cumulative counterflow. The peak and the relaxation times are a maximum and
crossing times of the mean trajectory, not sample means, so they do not carry
standard errors of the same kind. Relaxation times are computed on a non-uniform
observation grid with spacing $0.005\,\tau_0$ on $[T, T + 2\tau_0]$, so they are
resolved to within linear-interpolation error, except the front-loaded
$\tau_{0.1}$, which lies beyond this window.}
\label{tab:schedule-summary}

\end{table}

\begin{figure}[!htb]
\centering
\IfFileExists{figure_2_schedules.pdf}{\includegraphics[width=0.8\textwidth]{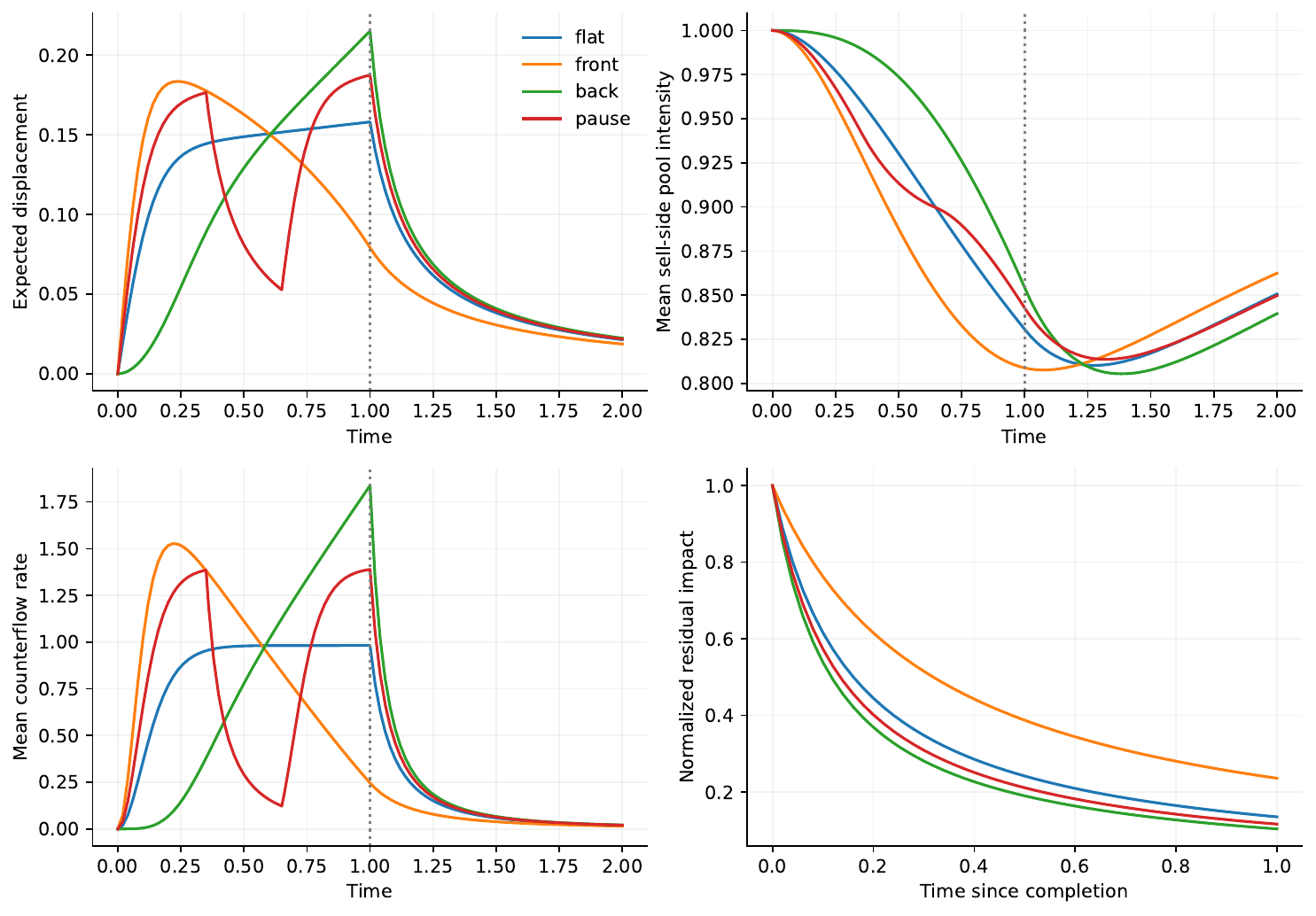}}{\fbox{\parbox[c][0.22\textheight][c]{0.92\textwidth}{\centering Figure placeholder: \texttt{figure\_2\_schedules.pdf}}}}
\caption{Schedule dependence and recovery in the GLE pool for flat, front-loaded, back-loaded, and interrupted buy schedules with common size and duration, at $Q = Q_0$ and $T = \tau_0$. The panels display expected displacement (upper left), mean sell-side pool intensity (upper right), the reported mean counterflow rate (lower left), and normalized residual impact $R_q(\tau)$ after completion (lower right). The vertical dotted line marks completion at $t / \tau_0 = 1$. The interrupted schedule pauses between $0.35T$ and $0.65T$. All panels show the first $\tau_0$ after completion, so the values of $\tau_{0.1}$ in \cref{tab:schedule-summary} lie beyond the plotted window. Pool depletion continues briefly after completion, while displacement begins to decline. The simulated displacement and counterflow satisfy the volume balance implied by \eqref{eq:displacement-ode}.}
\label{fig:schedules}
\end{figure}

The displacement panel in \cref{fig:schedules} shows that front-loading produces the largest response early in execution, followed by a decline as the trading rate falls. Its peak is $0.1835 p_0$ and it ends at $0.0792 p_0$, the lowest terminal impact of the four schedules. Back-loading gives the smallest early response and the largest terminal response, $0.2152 p_0$. The flat schedule ends at $0.1581 p_0$, and the interrupted schedule at $0.1875 p_0$, higher than flat because its active rate is larger. The flat run is separate from the one behind \cref{tab:impact-benchmark}, and the two terminal values differ by about one standard error of their difference. The paired standard errors of these terminal values are between $7.0\times10^{-5}$ and $1.5\times10^{-4}$, so the ordering across schedules is resolved well above the paired sampling noise at $2048$ paths.

The pool panel records a delayed response to trading. Front-loading brings depletion forward, whereas back-loading postpones it. The interruption slows depletion but does not immediately restore the pool. For all four schedules the mean pool intensity continues to fall for some time after completion and then recovers. Its minimum lies between $0.806$ and $0.814$ and is reached between $t = 1.08 \tau _0$ and $1.38 \tau _0$. This lag is consistent with the persistence of the order-flow and intrinsic-memory states after the current execution rate has fallen to zero. The mean intensity is still below its initial level at the end of the plotted window.

Terminal impact alone does not measure execution cost. The additional statistic
\begin{equation} \label{eq:execution-displacement}
C[q] = \frac{1}{Q} \int_0^T q_s J_q(s) \, ds, \qquad Q > 0,
\end{equation}
weights displacement by the executed volume. It excludes spreads, fees, and fill-price effects. The values of $C[q]$ in \cref{tab:schedule-summary} do not follow the terminal-impact ranking. Front-loading has the lowest terminal impact, $0.0792 p_0$, but the second-highest execution-weighted displacement, $0.1487$. It executes three quarters of its volume in the first half of the horizon, where its displacement is largest. The flat schedule has the lowest execution-weighted displacement, $0.1352$, although its terminal impact is twice that of front-loading. Integrating \eqref{eq:displacement-ode} gives the one-sided form of \eqref{eq:roundtrip-identity}, $Q\,C[q] = \tfrac{L_0}{2}\EE[D_T^2] + \EE\int_0^T q_t^{\mathrm{cf}}D_t\,dt$, which provides a check on the tabulated values. The paired standard errors of $C[q]$ are $4.5\times10^{-5}$ to $8.0\times10^{-5}$, so the differences among schedules are resolved at the reported path count.

\myparagraph{Recovery after execution.} \label{sec:relaxation}
After completion, $q_t = 0$ and the order-flow memory decays according to
\begin{equation} \label{eq:gj-decay}
g_{j,t} = g_{j,T} e^{ - \lambda_j(t - T)}.
\end{equation}

For a nonzero terminal response, normalized residual impact is
\begin{equation} \label{eq:normalized-relaxation}
R_q(\tau) = \frac{J_q(T + \tau)}{J_q(T)}, \qquad \tau \ge 0.
\end{equation}
The lower-right panel shows rapid normalized decay for all four schedules. The crossing time
\begin{equation} \label{eq:relaxation-crossing-time}
\tau_\alpha = \inf\{\tau \ge 0 : R_q(\tau) \le \alpha\}, \qquad \alpha \in \{0.5,0.1\},
\end{equation}
is reported in \cref{tab:schedule-summary}. It is computed on a non-uniform observation grid with spacing $0.005\,\tau_0$ on $[T, T + 2\tau_0]$.

Residual displacement at a finite horizon is not evidence of permanent impact. Without counterflow, the expected response remains constant after execution. For a fresh pool and positive terminal displacement, the independent relaxation benchmark is
\begin{equation} \label{eq:exact-counterflow-relaxation}
\tau = L_0 \int_{D_{T + \tau}}^{D_T} \frac{dD}{\mathcal A(D)}.
\end{equation}
The stochastic-pool bounds in \cref{prop:pool-bounds} provide the corresponding controls when the pool evolves.

The counterflow panel is checked against the displacement. In the stated units, integration of \eqref{eq:displacement-ode} gives the exact relation
\begin{equation} \label{eq:schedule-volume-balance}
L_0 J_q(t) + \int_0^t \EE\left[q_s^{\mathrm{cf},[q]} - q_s^{\mathrm{cf},[0]}\right] \, ds = \int_0^{\min(t,T)} q_s \, ds.
\end{equation}
For an isolated order started at zero displacement, the no-order counterflow is zero. The simulated displacement and accumulated counterflow satisfy this identity to within $3 \times 10^{-10}$ for all four schedules, and the plotted counterflow rate integrates to the accumulated counterflow within about $0.1\%$. The mean counterflow rate of the flat schedule levels off at $0.98 Q_0/\tau _0$, just below the order rate. By completion, counterflow has absorbed $84.2\%$ of the flat order, since $L_0 J_q(\tau_0) = 0.1581 Q_0$. Front-loading absorbs $92.1\%$ and back-loading $78.5\%$.

\subsection{Inherited-state effects and the memory spectrum} \label{sec:inherited}

The third comparison measures how a prior order changes the impact of a subsequent probe. The baseline comparison uses two buy orders of size $Q_0$ and duration $\tau _0$; \cref{sec:inherited-depletion} extends it to larger prior orders. The gap $G$ is measured between completion of the prior order and arrival of the probe, whose start time is
\begin{equation} \label{eq:compact-gaps}
t_G = \tau _0 + G.
\end{equation}
The current figure contains three short-gap cases between $G = 0$ and $G = \tau _0$. The extended comparison summarized below reaches $G = 50 \tau _0$ and addresses the longer tail of the inherited state. The comparison retains the full state at probe arrival, including residual displacement and every latent-memory variable.

Two histories evolve from the same initial law: one includes the prior order, and the other does not. At $t_G$, each history branches into a continuation with the probe and a continuation without it. Let $D^{[1,1]}$, $D^{[1,0]}$, $D^{[0,1]}$, and $D^{[0,0]}$ denote the displacements when both orders, only the prior order, only the probe, or neither order is present. The incremental probe impacts are
\begin{align} \label{eq:conditional-second-impact}
J_{2|1}(G) &= \EE\left[D_{t_G + \tau _0}^{[1,1]} - D_{t_G + \tau _0}^{[1,0]}\right], \\
J_{2|0}(G) &= \EE\left[D_{t_G + \tau _0}^{[0,1]} - D_{t_G + \tau _0}^{[0,0]}\right]. \nonumber
\end{align}
The histories are evaluated at the same elapsed time from initialization. This removes the change in the finite-start law from the comparison. For simulation estimates, common random numbers preserve the pairing across histories and continuations, while future innovations are independent of the prehistory.

The relative history effect is
\begin{equation} \label{eq:prior-history-effect}
\mathcal H(G) = \frac{J_{2|1}(G)}{J_{2|0}(G)} - 1,
\end{equation}
provided the denominator is numerically resolved. A prior buy order has two competing consequences. Its remaining displacement raises the counterflow faced by the probe and can reduce incremental impact. Its depleted opposing pool weakens counterflow and can increase that impact. The fresh-pool comparison retains the residual-displacement effect without depletion. The excess history effect relative to that reference is
\begin{equation} \label{eq:pool-channel}
\mathcal H_{\mathrm{pool}}(G) = \mathcal H(G) - \mathcal H_{\mathrm{fresh}}(G).
\end{equation}
This is an operational comparison across models. It includes the interaction of depletion with the inherited displacement and is not an exact additive separation of independent mechanisms.

Baseline calculations give $\mathcal H_{\mathrm{fresh}}(0) \simeq - 12.3\%$ and $\mathcal H_{\mathrm{pool}}(0) \simeq 7.5\%$ for the GLE pool. The two contributions give a net reduction of about $5\%$ in the incremental probe impact. Thus a negative total history effect does not imply that prior trading leaves the pool unaffected. In these calculations the additional impact associated with depletion partly offsets the reduction caused by residual displacement. At $G = 50 \tau _0$, the fresh-pool effect is still approximately $- 0.27\%$, consistent with its slowly decaying post-execution displacement.

\myparagraph{Spectral comparison.} \label{sec:powerlaws}
The baseline GLE and single-mode pools are compared with a broader specification containing four modes in each kernel and spectral breadth $1000$. The breadth measures are
\begin{equation} \label{eq:Rlambda}
R_\gamma = \frac{\gamma_{\max}}{\gamma_{\min}}, \qquad R_\lambda = \frac{\lambda_{\max}}{\lambda_{\min}}.
\end{equation}
Rates are logarithmically spaced around the corresponding baseline geometric rate. Integrated strengths are preserved,
\begin{align} \label{eq:kernel-strength-controls}
S_{YY}^{\mathrm{int}} &= \sum_{i = 1}^N \frac{a_i}{\gamma_i}, \qquad S_{YY}^{0} = \sum_{i = 1}^N a_i, \\
S_{YX}^{\mathrm{int}} &= \sum_{j = 1}^M \frac{c_j^{(0)}}{\lambda_j}, \qquad S_{YX}^{0} = \sum_{j = 1}^M c_j^{(0)}. \nonumber
\end{align}
The zero-lag strengths are distinct quantities and need not remain fixed. In particular,
\begin{equation} \label{eq:spectrum-force-variance}
\EE[\xi_t^2] = \sigma_Y^2 K_{YY}(0),
\end{equation}
so changing the intrinsic spectrum at fixed integrated strength and fixed $\sigma_Y$ can also change the force variance. The comparison measures the effect of these specified spectra under a common integrated-strength convention.

\begin{table}[!htb]
\centering
\small
\begin{tabular}{lcccc}
\hline
Pool specification & $\mathcal H_{\mathrm{pool}}(0)$ & $\mathrm{SE}$ & Half-decay time$/\tau _0$ & $\mathcal H_{\mathrm{pool}}(50 \tau _0)$ \\
\hline
Single mode & $9.3\%$ & $0.02\%$ & $2.15$ & $\simeq 0.02\%$ \\
Baseline GLE & $7.5\%$ & $0.03\%$ & $2.05$ & $\simeq 0.02\%$ \\
Broad spectrum & $6.2\%$ & $0.03\%$ & $1.74$ & $\simeq 0.10\%$ \\
\hline
\end{tabular}
\caption{Inherited-state comparison at common integrated kernel strengths. The
half-decay time refers to $\mathcal H_{\mathrm{pool}}(G)$ relative to its value
at zero gap, log-linearly interpolated between grid points. Percentages refer to
the relative probe-impact difference in \eqref{eq:pool-channel}; $0.10\%$
corresponds to $0.001$ in that equation. Paired delta-method standard errors of
$\mathcal H_{\mathrm{pool}}(0)$ are at most $0.03$ percentage points, so the
values of $\mathcal H_{\mathrm{pool}}(0)$ are resolved above the paired sampling
noise at $2048$ paths.}
\label{tab:memory-history}

\end{table}

The extended summary in \cref{tab:memory-history} shows that a broader spectrum need not slow every stage of recovery. Its relative pool effect halves sooner, but its long-gap residual is larger. The paired standard errors are at most $0.03$ percentage points, so the ordering across pools is resolved above the paired sampling noise. Fast modes can reduce the initial response while slow modes sustain a smaller tail. This distinction explains why a single half-decay time is insufficient to characterize recovery. The initial amplitude, intermediate decay, and long-gap residual describe different parts of the response. Together they distinguish the response profiles beyond the single-order band width.

The inherited displacement accounts for the fresh-pool effect quantitatively. The counterflow returns the displacement to the same quasi-steady path wherever the probe starts, so the probe forgets its initial displacement well before completion. Its increment is therefore reduced by the residual of the prior order at probe completion, and $\mathcal H_{\mathrm{fresh}}(G) \simeq - D_{t_G + \tau _0}^{[1,0]}/J_{2|0}(G)$. The simulated residuals at $t_G + \tau _0$ are $0.0178 p_0$ for $G = 0$ and $0.0094 p_0$ for $G = \tau _0$, which give $- 12.28\%$ and $- 6.51\%$, as measured. At long gaps the residual follows the large-$\tau$ limit $L_0/(\omega \tau)$ of \eqref{eq:quadratic-relaxation}, with $\omega = q_\star/(2 d_c^2) = 50$ at the baseline. At the probe start it equals $0.0020$, $0.0010$, and $0.0004 p_0$ for $G = 10$, $20$, and $50 \tau _0$, as this limit predicts, and the approximation gives $- 0.27\%$ at $G = 50 \tau _0$. The two contributions therefore decay at different rates. The residual-displacement contribution decays as $1/G$, whereas the pool-associated difference in \cref{tab:memory-history} halves within about $2 \tau _0$. The total effect is positive where the latter exceeds the residual-displacement contribution in magnitude. For the single-mode pool this happens at $G = \tau _0$ and $2 \tau _0$, where the total is approximately $+ 0.6\%$ and $+ 0.5\%$, and the total is negative again from $G = 5 \tau _0$. The GLE total remains negative at every gap and approaches the fresh-pool value, $- 0.25\%$ against $- 0.27\%$ at $G = 50 \tau _0$. The broad-spectrum total, $- 0.17\%$ at that gap, still carries its heavier pool tail. The sign of the total effect alone therefore says nothing about depletion.

The corresponding initial-pool diagnostic is
\begin{equation} \label{eq:inherited-pool-difference}
\Delta\rho_{\mathrm{init}}(G) = \EE\left[\rho(Y_{t_G}^{[1]}) - \rho(Y_{t_G}^{[0]})\right],
\end{equation}
where the superscript records whether the prior order is present. Together, this quantity, the total history effect, and the fresh-pool comparison distinguish the observed state difference from its effect on the incremental price response.

\begin{figure}[!htb]
\centering
\IfFileExists{figure_3_memory.pdf}{\includegraphics[width=\textwidth]{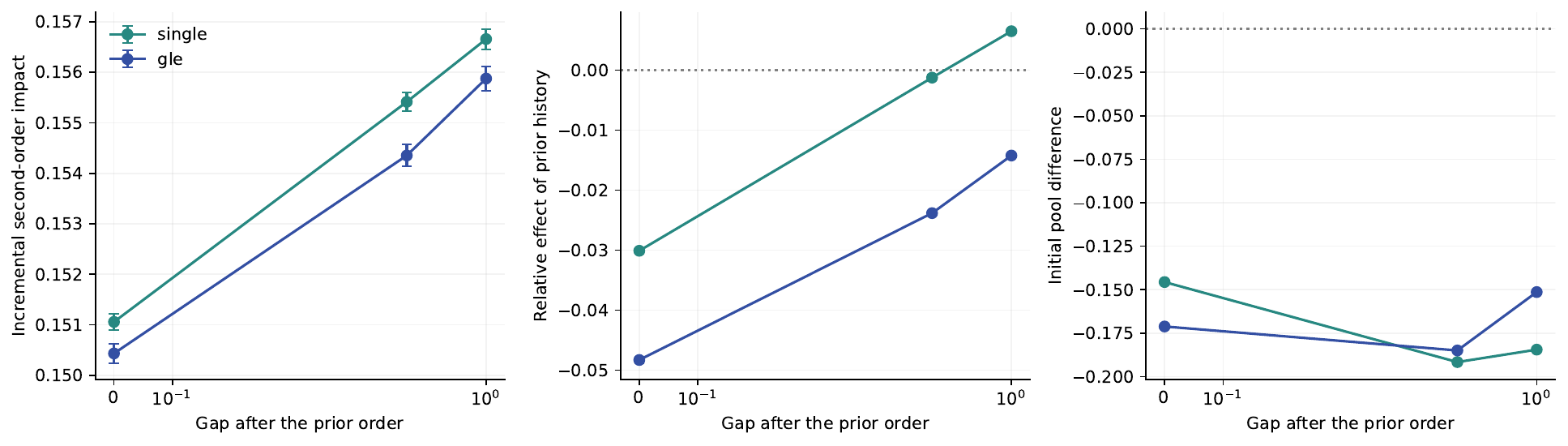}}{\fbox{\parbox[c][0.22\textheight][c]{0.92\textwidth}{\centering Figure placeholder: \texttt{figure\_3\_memory.pdf}\\[1em]Inherited-state response and memory-spectrum comparison}}}
\caption{Short-gap effect of a prior buy order on a subsequent buy order for the single-mode and GLE pools. Both orders have size $Q_0$ and duration $\tau _0$. From left to right, the panels show incremental probe impact $J_{2|1}(G)$, the total relative history effect $\mathcal H(G)$, and the initial opposing-pool difference $\Delta\rho_{\mathrm{init}}(G)$. The horizontal dotted lines mark zero history effect and zero pool difference. The residual displacement is retained, and each probe is measured against its matching no-probe continuation. The displayed gaps are $G / \tau _0 = 0$, $0.5$, and $1$. The middle panel shows the total effect, not the fresh-pool-adjusted quantity $\mathcal H_{\mathrm{pool}}$. The broad-spectrum comparison is reported separately in \cref{tab:memory-history}. Error bars are $\pm 1.96$ standard errors of the paired Monte Carlo estimates and are visible only in the left panel.}
\label{fig:memory}
\end{figure}

\Cref{fig:memory} shows a negative total history effect at zero gap for both pool models, $- 4.8\%$ for the GLE pool and $- 3.0\%$ for the single-mode pool. The effect weakens as the gap grows. At $G = \tau _0$ it is $- 1.4\%$ for the GLE pool and $+ 0.6\%$ for the single-mode pool, whose effect changes sign between $G = 0.5 \tau _0$ and $\tau _0$. The GLE value at $G = 0$ agrees with the decomposition above, $- 12.3\% + 7.5\%$. The incremental probe impact is slightly lower for the GLE pool than for the single-mode pool at every displayed gap, and their gap profiles differ. The initial opposing pool is depleted in both cases, by $0.15$ to $0.19$ in pool intensity. This pool difference is a diagnostic of the model's latent state. Empirical use would require an observable liquidity proxy. Its difference from the no-prior-order history first becomes more negative and then partly recovers, consistent with a delayed latent-state response after the prior order ends. Pool recovery and the incremental price response need not be monotone functions of the gap, because the probe inherits both displacement and latent memory.

The simultaneous negative history effect and negative pool difference illustrate why total probe impact cannot isolate depletion. The fresh-pool subtraction in \eqref{eq:pool-channel} is needed for the operational comparison of that contribution, but it is not displayed in the current figure. Nor do these short-gap panels show the long tail described by the separate broad-spectrum summary.

The broad-spectrum comparison uses four logarithmically spaced rates around each baseline geometric rate, with integrated strengths $S_{YY}^{\mathrm{int}}$ and $S_{YX}^{\mathrm{int}}$ preserved and zero-lag strengths left free. The half-decay times in \cref{tab:memory-history} are obtained by log-linear interpolation between the gap grid points. The fresh-pool-adjusted gap profiles subtract $\mathcal{H}_{\mathrm{fresh}}(G)$ from $\mathcal{H}(G)$ to give $\mathcal{H}_{\mathrm{pool}}(G)$ of \eqref{eq:pool-channel}.

\subsubsection{Depletion depth and spectral separation} \label{sec:inherited-depletion}

The additional comparison varies the prior buy size over $Q_{\mathrm{prior}}/Q_0 \in \{1,3,5\}$ while retaining a unit-size probe and constant-rate execution over $\tau _0$ for each order. The gap grid is $G/\tau _0 \in \{0,0.25,0.5,1,2,5,10,20,50\}$. The $81$ configurations use $2048$ paths, a time step of $0.01\tau _0$, and one Monte Carlo ensemble with a configuration-dependent seed per case. We write $\mathcal H(G;Q_{\mathrm{prior}})$ for the total relative history effect \eqref{eq:prior-history-effect} under the specified prior size. Its maximum $\mathcal H_{\max}$ over the nine sampled gaps and the maximizing gap $G_*$ are summarized in \cref{tab:inherited-depletion}.

\begin{table}[!htb]
\centering
\small
\begin{tabular}{llrrrrrr}
\hline
$Q_{\mathrm{prior}}/Q_0$ & Pool & $\mathcal H(0)$ & $\mathrm{SE}(\mathcal H(0))$ & $\mathcal H_{\max}$ & $\mathrm{SE}(\mathcal H_{\max})$ & $G_*/\tau _0$ & $\mathcal H(50\tau _0)$ \\
\hline
1 & Single & $- 3.00\%$ & $0.02\%$ & $+ 0.65\%$ & $0.02\%$ & $1$ & $- 0.26\%$ \\
 & GLE & $- 4.83\%$ & $0.03\%$ & $- 0.25\%$ & $<0.01\%$ & $50$ & $- 0.25\%$ \\
 & Broad & $- 6.10\%$ & $0.03\%$ & $- 0.17\%$ & $<0.01\%$ & $50$ & $- 0.17\%$ \\
\hline
3 & Single & $+ 5.35\%$ & $0.06\%$ & $+ 8.50\%$ & $0.05\%$ & $1$ & $- 0.26\%$ \\
 & GLE & $+ 0.80\%$ & $0.06\%$ & $+ 4.01\%$ & $0.03\%$ & $2$ & $- 0.25\%$ \\
 & Broad & $- 2.02\%$ & $0.06\%$ & $+ 1.60\%$ & $0.03\%$ & $2$ & $- 0.01\%$ \\
\hline
5 & Single & $+ 9.23\%$ & $0.07\%$ & $+ 13.24\%$ & $0.06\%$ & $1$ & $- 0.26\%$ \\
 & GLE & $+ 3.45\%$ & $0.07\%$ & $+ 7.22\%$ & $0.04\%$ & $2$ & $- 0.25\%$ \\
 & Broad & $- 0.21\%$ & $0.07\%$ & $+ 3.83\%$ & $0.04\%$ & $2$ & $+ 0.15\%$ \\
\hline
\end{tabular}
\caption{Total relative history effect as the prior-order size increases,
with paired delta-method standard errors. The effect includes both residual
displacement and the evolving latent pool. Maxima and their locations refer
to the sampled gap grid, not to continuous-gap optimization. The reported
$\mathrm{SE}(\mathcal H_{\max})$ is the standard error of $\mathcal H$
evaluated at the maximizing gap $G_*$. At prior size $Q_0$, the GLE and
broad-spectrum maxima occur at the upper grid boundary and are not interior
peaks.}
\label{tab:inherited-depletion}

\end{table}

The range of the three maxima grows from $0.90$ percentage points at $Q_{\mathrm{prior}} = Q_0$ to $6.90$ at $3Q_0$ and $9.41$ at $5Q_0$, with each extreme resolved at better than $0.06$ percentage points by its paired delta-method standard error. At the two larger sizes, the single-mode maximum is largest, the GLE maximum is intermediate, and the broad-spectrum maximum is smallest, as shown in \cref{fig:inherited-depletion}. The sign pattern also depends on the gap. The baseline single-mode effect is already positive near $G = \tau _0$, while the baseline GLE and broad-spectrum effects remain negative. For the broad spectrum, the effect at zero gap remains negative even at the larger prior sizes. Its first positive sampled value occurs at $G = 0.5\tau _0$ for $3Q_0$ and $G = 0.25\tau _0$ for $5Q_0$. The paired standard errors are $0.02$--$0.07$ percentage points across the grid, so the larger prior orders separate the spectra in amplitude. At $Q_{\mathrm{prior}} = Q_0$ and $G = 50\tau_0$, the paired standard errors of the GLE and broad-spectrum pools are below $0.01$ percentage points.

\begin{figure}[!htb]
\centering
\IfFileExists{figure_4_inherited_depletion.pdf}{\includegraphics[width=0.7\textwidth]{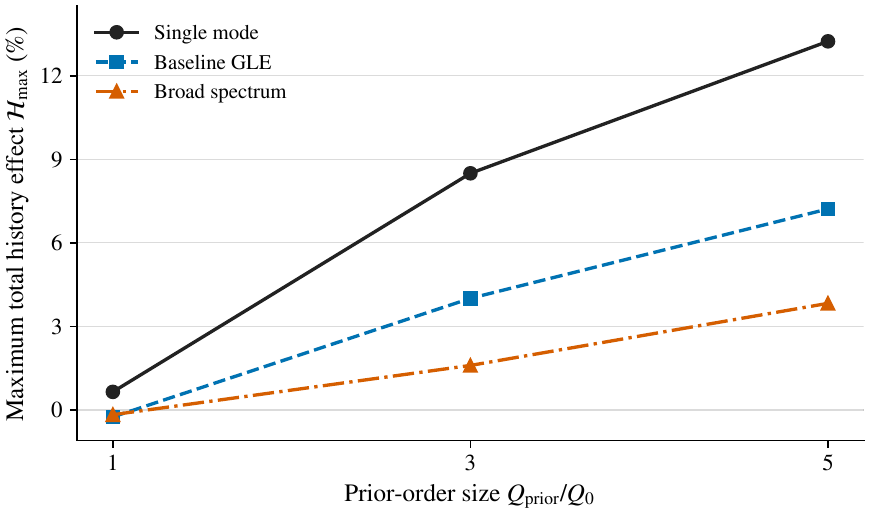}}{\fbox{\parbox[c][0.22\textheight][c]{0.92\textwidth}{\centering Figure placeholder: \texttt{figure\_4\_inherited\_depletion.pdf}}}}
\caption{Maximum total history effect over the reported gap grid versus prior-order size. Markers show the point estimates in \cref{tab:inherited-depletion}, connecting lines guide the eye. The paired standard errors in \cref{tab:inherited-depletion} are at most $0.06$ percentage points, smaller than the markers.}
\label{fig:inherited-depletion}
\end{figure}

The initial pool differences at zero gap range from $- 0.428$ to $- 0.370$ for $3Q_0$ and from $- 0.535$ to $- 0.493$ for $5Q_0$. With the symmetric no-prior mean equal to one, these correspond approximately to mean pool intensities of $0.57$--$0.63$ and $0.47$--$0.51$, respectively, above the floor $0.3$. The comparison therefore tests substantial but incomplete depletion, not suppression of spectral differences by the floor. Indeed, the broad pool is more depleted at zero gap than the single-mode pool but has a smaller total history effect. This ordering depends on the inherited displacement and subsequent pool evolution, not on initial depletion alone. The unit-prior-size rows at gaps $0$, $0.5\tau_0$, and $\tau_0$ reproduce the short-gap values behind \cref{fig:memory}, confirming that the two comparisons are computed under the same state dynamics.

At both larger prior sizes, the sampled maxima occur at $G = \tau _0$ for the single-mode pool and $G = 2\tau _0$ for the other spectra. Their locations remain sensitive to sampling and gap resolution: at $5Q_0$, the GLE values at gaps $\tau _0$ and $2\tau _0$ are $7.21\%$ and $7.22\%$, and the broad-spectrum values are $3.81\%$ and $3.83\%$. The paired standard errors at these near-tied gaps are $0.05\%$ and $0.04\%$, comparable to the amplitude difference, so the peak location is not resolved at the reported path count. The larger differences in amplitude show greater sensitivity to the specified spectra under stronger prior forcing, and the paired standard errors in \cref{tab:inherited-depletion} confirm that these amplitude differences are resolved above the paired sampling noise at the reported path count.

\subsection{Interpretation and scope of the results} \label{sec:robustness}

The comparisons show how depletion changes the size--duration response and how execution timing and prior trading affect impact. They use a symmetric potential, constant cross-memory amplitudes, fixed ambient volatility, and finite-start initialization. Robustness to alternative structural parameters and initialization remains to be assessed.

The spectral comparison preserves integrated kernel strengths while allowing zero-lag strengths and force variance to change. The observed differences therefore reflect the chosen kernel specifications. Whether these differences are detectable in market data, or permit unique identification of the kernel parameters, requires empirical investigation.

The comparisons retain the balanced reference \eqref{eq:reference}. Adding the same drift $\mu \, dt$ to price and reference leaves the displacement equation unchanged when its other coefficients depend only on displacement, latent state, and the prescribed order. Adding drift only to the reference changes the relative dynamics and defines a different model.

\section{Discussion and conclusions} \label{sec:conclusions}

The model hierarchy shows that an extended concave impact regime, including an approximately square-root interval, can emerge without prescribing a square-root dependence of impact on order size. Under the stated assumptions on individual trading responses and activation thresholds, the fresh-pool benchmark yields an intermediate square-root regime between linear small- and large-order limits. The numerical comparisons show that this regime persists under dynamic depletion and memory, although its extent and magnitude change. These results distinguish the counterflow mechanism generating the regime from the liquidity dynamics that modify it. The exponent and its range remain conditional on the behavioral assumptions. Normalizing thresholds and individual responses by the noise scale gives volatility proportionality within the square-root regime, while thresholds tied to the execution horizon remove duration dependence from the fresh-pool square-root law.

Memory acts through the pool size, affecting the square-root prefactor, the extent of depletion, and post-execution recovery. A positive floor prevents complete pool depletion, without supplying a general concavity theorem. The numerical comparisons show a narrower square-root interval under depletion and often similar single-order responses for the matched single-mode and baseline GLE pools. This supports the simpler specification as a useful approximation for those outputs at the reported settings, while the quasi-static prefactor remains spectrum-dependent through the law of the latent state. Schedule timing changes both terminal impact and execution-weighted displacement, with different rankings. A prior order leaves residual displacement as well as a depleted pool. Their effects on a subsequent probe can oppose one another. Increasing the prior order to $3Q_0$ and $5Q_0$ enlarges the spread of the sampled maximum total history effects to $6.90$ and $9.41$ percentage points, respectively. The baseline broad spectrum also has a larger reported long-gap pool-associated difference. The paired standard errors resolve these amplitude differences, while the peak locations remain unresolved on the sampled gap grid. These comparisons motivate conditional and repeated-order observables as additional sources of information about memory. They do not show that single-order impact is generally independent of the spectrum, nor that multi-order observations identify it uniquely.

Constant displayed depth and counterflow opposing displacement imply nonnegative expected round-trip costs for the log-price functional \eqref{eq:cost}, independently of the memory spectrum. This is the no-manipulation result under the stated first-order cost convention. If displayed depth were replaced by $\phi(Y_t)$, the identity would acquire the term $-\EE\int_0^H \tfrac12 D_t^2\,d\phi(Y_t)$, which can be negative. Extending the result to state-dependent depth would therefore require additional restrictions, see \cite{FruthEtAl2014,AlfonsiAcevedo2014}.

Three limitations bear on the interpretation. First, the balanced reference leaves no permanent impact. Under quadratic onset with fixed thresholds, a buy order that has ended satisfies $1/D_{T+\tau} = 1/D_T + (\omega/L_0)\int_T^{T+\tau}\rho(Y_s)\,ds$ on every path, so the floor $\rho_{\mathrm{fl}} > 0$ drives the displacement to zero. For the fresh pool the late residual is $L_0/(\omega\tau)$, independent of order size. For the stochastic pool this limit also requires the time average of $\rho(Y_s)$ to approach one. Fair-pricing arguments instead predict a permanent component proportional to peak impact \cite{FarmerEtAl2013}. Such a component would require the reference $m_t$ to respond to the order. Second, the fixed detection horizon of the numerical study makes terminal impact depend on the execution rate and gives a nearly flat path after a short transient, whereas elapsed-time thresholds give square-root growth in executed volume. A single specification with both properties is left open. Third, the latent state is forced by submitted flow in reduced form. Forcing it by realized counterflow would tie depletion to absorbed volume, at the cost of feedback in the forcing.

The reported stochastic comparisons use Monte Carlo, alongside analytical and deterministic benchmarks. \Cref{app:pinn} records a backward PINN formulation verified against Monte Carlo at the baseline. Further numerical development, empirical calibration, and validation are left to a companion study. The empirical task includes distinguishing induced counterflow from background trading, specifying observable liquidity proxies, and assessing the counterfactual reference. The present contribution is the model, its conditional analytical results, and numerical comparisons illustrating how latent-pool dynamics modify impact.

\section*{Disclosure statement}

No potential conflict of interest was reported by the author.

\section*{Funding}

No funding was received.

\section*{Disclaimer}

The opinions expressed here are the author's own and do not represent the views of the author's employer.

\section*{Acknowledgments}

I thank Michael Isichenko for various discussions on the counterflow mechanism.

\printbibliography[title={References}]

\appendix
\label{app:proofs}
\appendixpage
\numberwithin{equation}{section}
\setcounter{equation}{0}

\section{Exactness of the Markovian lift (\cref{prop:lift})}  \label{sec:lift-appendix}

\begin{proof}

From \eqref{eq:etastate},
\begin{equation}
d \eta _{i,t} = - \gamma _i \eta _{i,t}\,dt + \sigma _{Y,i}\,dW_{i,t}.
\end{equation}
Differentiating \eqref{eq:hstate} therefore gives the first equation in
\eqref{eq:hg-ode}. Differentiating \eqref{eq:gstate} gives the second. Moreover,
\begin{equation}
\sum_{i = 1}^N a_i h_{i,t} = \int_0^t K_{YY}(t - s)\, dY_s - \sum_{i = 1}^N a_i \eta _{i,t}, \qquad \sum_{j = 1}^M g_{j,t} = \int_0^t K_{YX}(t - s;Y_s)q_s\,ds.
\end{equation}
Substitution into \eqref{eq:gle-lift} recovers \eqref{eq:gle} with $\xi _t = \sum_i
a_i \eta _{i,t}$.

It remains to verify the FDT. The stationary OU initialization in
\eqref{eq:etastate} gives, for all $s,t\ge0$,
\begin{equation}
\EE[ \eta _{i,t} \eta _{i,s}] = \frac{ \sigma _{Y,i}^2}{2 \gamma _i}e^{ - \gamma _i|t - s|}.
\end{equation}

Independence of the Brownian motions and the choice $\sigma _{Y,i}^2 = 2 \gamma _i
 \sigma _Y^2/a_i$ then yield
\begin{equation}
\EE[ \xi _t \xi _s] = \sum_{i = 1}^N a_i^2 \frac{ \sigma _{Y,i}^2}{2 \gamma _i}e^{ - \gamma _i|t - s|} = \sigma _Y^2 \sum_{i = 1}^N a_i e^{ - \gamma _i|t - s|} = \sigma _Y^2 K_{YY}(|t - s|),
\end{equation}
which is \eqref{eq:fdt}. Hence the finite-dimensional system reproduces both the
friction term and the stationary colored force exactly.

\end{proof}

\section{The counterflow construction (\cref{prop:cf-structure})} \label{sec:cf-proofs}

\begin{proof}

Oddness follows from the factor $\operatorname{sgn}(D)$ in \eqref{eq:cf-aggregate}. For $D\ge0$ write $\mathcal A(D) = \int_{[0,\infty)}(D - \chi )_ + \, \Pi (d \chi )$, where $x_ + = \max(x,0)$. For each $\chi$ the integrand is nondecreasing and convex in $D$, with right derivative $\ind_{\{ \chi \le D\}}$. Dominated convergence, justified by the finite mass of $\Pi$, gives $\mathcal A'(D) = \Pi ([0,D])\ge0$. This right derivative is nondecreasing, so $\mathcal A$ is convex on $[0,\infty)$. This proves (i).

If $\pi _0 = 0$, then for $D > 0$
\begin{equation*}
\mathcal A(D) = \int_0^D(D - \chi )\, \pi ( \chi )\,d \chi = \pi (0)\frac{D^2}{2} + \int_0^D(D - \chi )\big( \pi ( \chi ) - \pi (0)\big)\,d \chi .
\end{equation*}
The last integral is bounded in magnitude by $\tfrac12D^2\sup_{0\le \chi \le D}| \pi ( \chi ) - \pi (0)|$, which is $o(D^2)$ by continuity of $\pi$ at zero. Oddness extends the result to $D < 0$. This proves (ii).

For $D > 0$,
\begin{equation*}
\mathcal A(D) = | \Pi |(D - \bar \chi ) - \int_{(D,\infty)}(D - \chi )\, \Pi (d \chi ).
\end{equation*}
The last integral is bounded in magnitude by $\int_{(D,\infty)} \chi \, \Pi (d \chi )$, which tends to zero as $D\to\infty$ because the first moment is finite. Oddness again extends the result to $D < 0$. This proves (iii).

\end{proof}

\section{Well-posedness and round-trip cost (\cref{lem:wellposed,prop:roundtrip})} \label{sec:roundtrip-proof}

\begin{proof}[Proof of \cref{lem:wellposed}]
The noise enters only the $h$-equations, and it enters additively. The drift is locally Lipschitz in the state, uniformly on $[0,H]$. For the counterflow $\varphi(D,y) = \rho(\operatorname{sgn}(D)\,y)\,\mathcal A_t(D)$ this includes the crossing of $D = 0$. Indeed $\mathcal A_t(0) = 0$ and $|\mathcal A_t(D)|\le|\hat\Pi|\,|D|/s_t$. For $D$ and $D'$ of equal sign the bound follows from the Lipschitz continuity of $\rho$ and $\mathcal A_t$. For $D > 0 > D'$ it follows from $|\varphi(D,y) - \varphi(D',y')|\le|\varphi(D,y)| + |\varphi(D',y')|\le\sup\rho\,|\hat\Pi|\,|D - D'|/s_t$. A unique strong solution therefore exists up to an explosion time $\zeta$.

On $[0,\zeta)$, $\operatorname{sgn}(D_t)\,\varphi(D_t,Y_t)\ge0$. Hence $|D_t|$ is absolutely continuous with derivative at most $|q_t|/L_0$, which gives the bound on $D$. The states $g_{j,t}$ stay bounded by constants $\bar g_j$, because $c_j$ and $q$ are bounded.

Let $r_{i,t} = h_{i,t} + \eta_{i,t}$, with $\eta_{i,t}$ the stationary mode of \eqref{eq:etastate}. By \eqref{eq:hg-ode} the Brownian terms cancel and $\dot r_{i,t} = -\gamma_i r_{i,t} + \dot Y_t$. Write $\Phi_t = U'(Y_t) + \sum_i a_i r_{i,t}$ and $f_t = \xi_t + \sum_j g_{j,t}$, with $\xi_t = \sum_i a_i\eta_{i,t}$. Then \eqref{eq:gle-lift} reads $\dot Y_t = -\Phi_t + f_t$, and $V_t = U(Y_t) + \tfrac12\sum_i a_i r_{i,t}^2$ satisfies
\begin{equation*}
\dot V_t = \Phi_t\dot Y_t - \sum_i a_i\gamma_i r_{i,t}^2 = -\Phi_t^2 + \Phi_t f_t - \sum_i a_i\gamma_i r_{i,t}^2 \le \tfrac14 f_t^2.
\end{equation*}
Hence $V_t\le V_0 + \tfrac14\int_0^t f_s^2\,ds$, which is finite on every path because $f$ is continuous. Since $u_4 > 0$, there is a constant $C$ with $U(y)\ge u_4y^4/8 - C$. The processes $Y$ and $r$, and therefore $h$, stay bounded on $[0,\zeta\wedge H)$, so $\zeta > H$.

Under the initial law of \cref{prop:lift}, $Y_0 = 0$ and $r_{i,0} = h_{i,0} + \eta_{i,0} = 0$, so $V_0 = U(0) = 0$. With $\EE[\xi_t^2] = \sigma_Y^2K_{YY}(0)$ from \eqref{eq:fdt},
\begin{equation*}
\EE[V_t]\le\tfrac12\int_0^t\Big(\sigma_Y^2K_{YY}(0) + \big(\textstyle\sum_j\bar g_j\big)^2\Big)\,ds,
\end{equation*}
and the fourth-moment bound follows from $U(y)\ge u_4y^4/8 - C$.
\end{proof}

\begin{proof}[Proof of \cref{prop:roundtrip}]

Let $x_t = \int_0^t q_s\,ds$ denote the position, so that $x_0 = x_H = 0$ and $x$ is bounded and predictable. Since $D_0 = 0$, the price and the reference start together, $X_0 = m_0$, and $X_t - X_0 = D_t + (m_t - m_0)$ for all $t$. By \eqref{eq:reference} and integration by parts,
\begin{equation*}
\int_0^H q_t\,(m_t - m_0)\,dt = \big[x_t\,(m_t - m_0)\big]_0^H - \int_0^H x_t\,\sigma _t\,dW_t^X = - \int_0^H x_t\,\sigma _t\,dW_t^X .
\end{equation*}
The last integral has zero expectation because $x$ is bounded and the price-noise integral is a square-integrable martingale. The displacement is absolutely continuous, and \eqref{eq:displacement-ode} gives $q_t = L_0\dot D_t + q_t^{\mathrm{cf}}$. Hence
\begin{equation*}
\int_0^H q_t\,D_t\,dt = L_0\int_0^H D_t\,\dot D_t\,dt + \int_0^H q_t^{\mathrm{cf}}D_t\,dt = \frac{L_0}{2}D_H^2 + \int_0^H q_t^{\mathrm{cf}}D_t\,dt .
\end{equation*}
Taking expectations gives \eqref{eq:roundtrip-identity}. Both terms are nonnegative because $q_t^{\mathrm{cf}}D_t\ge0$. Under \eqref{eq:qcf} this sign condition holds for every $\rho > 0$, because $\mathcal A_t$ has the sign of its argument.

\end{proof}

\section{Proof of \cref{prop:sqrt-cf}} \label{sec:proof4}

\begin{proof}

Let $q = Q/T$ and $D_\infty = \sqrt{q/ \omega }$. Since $D_0 = 0 < D_\infty$ and the right side of \eqref{eq:cf-ode} is positive on $[0,D_\infty)$, the solution remains in this interval and satisfies $L_0\dot D_t = \omega (D_\infty^2 - D_t^2)$. Separation of variables gives
\begin{equation*}
D_t = D_\infty\tanh\!\left(\frac{ \omega D_\infty t}{L_0}\right).
\end{equation*}
Setting $t = T$ and using $\omega D_\infty T = \sqrt{ \omega QT}$ yields \eqref{eq:sqrt-cf}. As $Q\to0$, $\tanh x = x + O(x^3)$ gives $\mathcal I_T(Q) = Q/L_0 + O(Q^2)$, so $\delta _{\mathrm{eff}}\to1$. As $Q\to\infty$, the hyperbolic tangent tends to one exponentially fast in $\sqrt Q$, so $\log\mathcal I_T(Q) = \tfrac12\log Q + \mathrm{const} + o(1)$ and its derivative with respect to $\log Q$ tends to one half.

\end{proof}

\section{Duration-free thresholds (\cref{rem:duration-free})} \label{sec:duration-free-proof}

With a fresh pool, quadratic onset, and a constant rate $q = Q/T$, the displacement solves $L_0\dot D_t = q - \hat\pi(0)\,D_t^2/(2s_t^2)$ with $D_0 = 0$.

For $\tau _d = c_\tau T$ the threshold scale $s = \sigma\sqrt{c_\tau T}$ is constant, and \cref{prop:sqrt-cf} applies with $\omega = \hat\pi(0)/(2\sigma^2c_\tau T)$. Then $\omega T = \hat\pi(0)/(2\sigma^2 c_\tau)$, so that $\sqrt{Q/(\omega T)} = \sigma\sqrt{2c_\tau Q/\hat\pi(0)}$ and $\sqrt{\omega QT}/L_0 = x$. This gives the first expression in \eqref{eq:duration-free}.

For $s_t = \sigma\sqrt{c_\tau t}$, write $\alpha_q = q/L_0$ and $\beta_q = \hat\pi(0)/(2\sigma^2c_\tau L_0)$, so that $\dot D_t = \alpha_q - \beta_q D_t^2/t$. The substitution $D_t = t\,\dot w_t/(\beta_q w_t)$ turns this Riccati equation into the linear equation $t\,\ddot w_t + \dot w_t - \alpha_q\beta_q w_t = 0$. Its solution regular at zero is $w_t = I_0\big(2\sqrt{\alpha_q\beta_q t}\big)$, which gives
\begin{equation*}
D_t = \sqrt{\frac{\alpha_q t}{\beta_q}}\; \frac{I_1\big(2\sqrt{\alpha_q\beta_q t}\big)}{I_0\big(2\sqrt{\alpha_q\beta_q t}\big)} .
\end{equation*}
This solution vanishes at zero with $\dot D_0 = \alpha_q$. At $t = T$, $\alpha_q T/\beta_q = 2\sigma^2c_\tau Q/\hat\pi(0)$ and $\alpha_q\beta_q T = x^2$, which gives the second expression in \eqref{eq:duration-free}. Both expressions depend on $Q$ but not on $T$. The expansion $I_1(z)/I_0(z) = 1 - 1/(2z) + O(z^{-2})$ as $z\to\infty$ gives the algebraic approach stated in \cref{rem:duration-free}.

\section{Pathwise bounds (\cref{prop:pool-bounds})} \label{sec:pool-bounds-proof}

\begin{proof}

For $r\ge0$ and $D\ge0$ define $f(t,D;r) = \big[q - r\,\mathcal A_t(D)\big]/L_0$. Since $\mathcal A_t(0) = 0$, the drift of the displacement at $D = 0$ equals $q/L_0\ge0$, so $D_t\ge0$ on $[0,T]$, and the same holds for each $D^{(r)}$. For $D\ge0$ the function $f$ is nonincreasing in $r$, because $\mathcal A_t(D)\ge0$. Along the fixed path, $D$ solves $\dot D_t = f\big(t,D_t; \rho(Y_t)\big)$, and
\begin{equation*}
f(t,D;\rho_{\max})\le f\big(t,D;\rho(Y_t)\big)\le f(t,D;\rho_{\min}), \qquad D\ge0.
\end{equation*}
Since $f$ is locally Lipschitz in $D$, the comparison theorem for scalar ordinary differential equations gives \eqref{eq:pool-bounds}. The closed form follows from \cref{prop:sqrt-cf} with $\omega$ replaced by $r\omega$. The same argument on $[T,T+\tau]$ with $q = 0$, started from the realized $D_T$, gives \eqref{eq:pool-relaxation-bounds} through the relaxation formula \eqref{eq:quadratic-relaxation}.

\end{proof}

\section{Backward equations and a PINN approximation} \label{app:pinn}

The stochastic results in the main text use Monte Carlo simulation of the lifted system. This appendix records a backward formulation of the same response functionals, its approximation by physics-informed neural networks (PINN), and a verification against Monte Carlo at the baseline. It is intended as a starting point for the companion study.

\subsection{Backward equations for model response functionals} \label{subsec:backward-observables}

The generator of the augmented process of \cref{subsec:numerical-lift} acts on a sufficiently smooth function $f$ according to
\begin{align} \label{eq:numerical-generator}
\mathcal L_t^q f &= b_q \partial_D f + F \partial_y f + \sum_{i = 1}^N (F - \gamma _i h_i) \partial_{h_i} f + \sum_{j = 1}^M \left[ - \lambda _j g_j + c_j(y) q_t\right] \partial_{g_j} f + \frac{1}{2} \sum_{i = 1}^N \sigma _{Y,i}^2 \partial_{h_i h_i} f.
\end{align}
There are no mixed diffusion terms, either between the displacement and the memory states or between distinct memory states. The generator is degenerate because $D$, $Y$, and the order-flow memory states have no direct Brownian increments. A direct algebraic check of the generator follows from applying it to $f(\mathbf z) = D h_i$,
\begin{equation} \label{eq:numerical-generator-cross-check}
\mathcal L_t^q(D h_i) = b_q h_i + D(F - \gamma _i h_i),
\end{equation}
which contains no second-order contribution.

When $c_j$ is constant and $g_{j,0}$ is deterministic, $g_j(t)$ in \eqref{eq:numerical-deterministic-flow-memory} is substituted directly into the coefficients of the backward equation and need not be included as a PINN state variable. State-dependent amplitudes $c_j(Y_t)$ generally prevent this reduction.

The numerical experiments require both expectations of state variables at specified times and expectations of quantities accumulated over an interval. These objects can be computed from a common backward equation.

For an observation horizon $H$, terminal function $G$, and running function $r$, define
\begin{equation} \label{eq:general-backward-observable}
u_{G,r}^q(t,\mathbf z;H) = \EE_{t,\mathbf z}^q \left[G(\widetilde{\mathbf Z}_H) + \int_t^H r(s,\widetilde{\mathbf Z}_s) \, ds\right].
\end{equation}
The function $u_{G,r}^q$ satisfies
\begin{equation} \label{eq:backward-observable-pde}
\partial_t u_{G,r}^q + \mathcal L_t^q u_{G,r}^q + r = 0, \qquad u_{G,r}^q(H,\mathbf z;H) = G(\mathbf z).
\end{equation}
No discounting is introduced because these equations compute physical expectations of model responses rather than prices of traded claims.

Let $\nu _0$ denote the initial distribution of the latent state and memory variables. Define the initial-state averaging operator by
\begin{equation} \label{eq:initial-average-operator}
\mathcal M_{ \nu _0}[u] = \int u(0,\mathbf z) \, \nu _0(d\mathbf z).
\end{equation}
Every order experiment is paired with a no-order calculation using the same model parameters, initial distribution, computational domain, and numerical settings. For a terminal function $G$, the order-induced response at horizon $H$ is therefore
\begin{equation} \label{eq:paired-terminal-response}
\Delta _G^q(H) = \mathcal M_{ \nu _0}\left[u_{G,0}^q(\cdot;H) - u_{G,0}^0(\cdot;H)\right].
\end{equation}
This subtraction removes changes generated by the uncontrolled stochastic dynamics and isolates the response associated with the prescribed order.

The principal price response is the expected order-induced displacement,
\begin{equation} \label{eq:output-impact}
J_q(t) = \Delta _D^q(t).
\end{equation}
The expected cumulative signed counterflow volume is obtained by setting
\begin{equation} \label{eq:counterflow-running-volume}
r_{\mathrm{cf}}(t,\mathbf z) = \rho\big(\operatorname{sgn}(D)\,y\big)\,\mathcal A_t(D)
\end{equation}
and defining
\begin{equation} \label{eq:output-counterflow-volume}
V_q^{\mathrm{cf}}(H) = \mathcal M_{ \nu _0}\left[u_{0,r_{\mathrm{cf}}}^q(\cdot;H) - u_{0,r_{\mathrm{cf}}}^0(\cdot;H)\right].
\end{equation}
Because the depth is constant, the accumulated contribution of the counterflow to log-price displacement is $V_q^{\mathrm{cf}}(H)/L_0$.

Second moments can be obtained by choosing terminal functions such as $G(\mathbf z) = D^2$ or $G(\mathbf z) = y^2$. Consequently, the backward solver provides a common computational framework for all expectations and accumulated quantities used in the main text. Arbitrary distributional quantiles are not inferred from these moment calculations. If such quantities are required, the corresponding distributional problem must be solved separately.

For post-execution calculations, the prescribed order is set to zero after the execution horizon $T$, and the backward problem is solved for horizons $H > T$. Repeating this calculation over a sequence of horizons generates the post-execution response curves.

\subsection{PINN approximation} \label{subsec:pinn-solver}

We approximate the backward equations by physics-informed neural networks (PINN), following the numerical approach of \cite{ItkinKazbekGLE}. The equations here describe physical expectations under prescribed order flow. No market-data calibration objective is included in the present study.

The full state dimension is $N + M + 2$, reduced to $N + 2$ when the order-flow memory states are deterministic and eliminated by \eqref{eq:numerical-deterministic-flow-memory}. A PINN represents the conditional response without a tensor-product grid. This can be useful for repeated evaluation over initial states, although the baseline comparisons below show no computational advantage over Monte Carlo. A network taking model parameters as additional inputs would require separate training and validation and is not used for the reported sweeps.

After eliminating deterministic order-flow memory variables whenever \eqref{eq:numerical-deterministic-flow-memory} applies, the terminal condition is imposed exactly through
\begin{equation} \label{eq:pinn-terminal-ansatz}
u_ \theta (t,\mathbf z;H) = G(\mathbf z) + (H - t) N_ \theta (t,\mathbf z;H),
\end{equation}
where $N_ \theta$ is a neural network with smooth activation functions. Time, state variables, and outputs are scaled before training. Automatic differentiation evaluates all derivatives entering \eqref{eq:numerical-generator}, including the second derivatives in the memory states.

The PDE residual is
\begin{equation} \label{eq:pinn-residual}
\mathcal R_ \theta (t,\mathbf z) = \partial_t u_ \theta (t,\mathbf z;H) + \mathcal L_t^q u_ \theta (t,\mathbf z;H) + r(t,\mathbf z),
\end{equation}
and the interior training loss is
\begin{equation} \label{eq:pinn-loss}
\mathcal J_{\mathrm{PDE}}( \theta ) = \frac{1}{N_c} \sum_{k = 1}^{N_c} w_k \left|\mathcal R_ \theta (t_k,\mathbf z_k)\right|^2, \qquad w_k > 0.
\end{equation}
Independent residual evaluations can guide collocation refinement, especially in regions contributing to initial-state averages or reached by large orders. The baseline comparisons below and in \cref{app:pinn-convergence} supply the reported verification. Systematic residual-based refinement over a parameter domain is not claimed here.

For schedules with discontinuities, the time domain is divided at the switching times, including the end of execution. The solution is matched continuously between adjacent intervals. Its time derivative need not be continuous when the drift changes discontinuously.

The reported impact exponents are obtained from separately computed order sizes. They are not derivatives of a trained parametric network.

\subsection{State-space coverage and initial-state averaging} \label{subsec:pinn-domain}

The backward problem is posed on an unbounded state space, whereas the baseline PINNs are trained on finite regions constructed from pilot simulations. Their reported comparisons concern the sampled initial states at the tested schedules and horizons, not arbitrary states or parameter values.

A finite collocation region does not by itself define a boundary condition. Any imposed spatial boundary condition must be justified by the response functional. The baseline formulation imposes the terminal condition exactly and does not introduce arbitrary zero-value or reflecting spatial conditions. Finite-domain coverage remains an approximation limitation.

For the finite-start law, initial-state averaging is an integral over Gaussian modal variables. Gauss--Hermite or sparse-grid quadrature provides deterministic approximations to this integral, with a quadrature error distinct from network approximation error. Gaussian initial variables do not make these averages exact for a nonlinear network. The reported PINN verification instead evaluates the network on the same $2048$ initial states as the independent Monte Carlo reference, so that the comparison does not confound their different initial samples.

An equilibrium latent ensemble would require the joint invariant law of $Y$ and its memory states. It is not used in the reported study, which retains the finite-start law throughout.

\subsection{Baseline verification} \label{subsec:pinn-baseline}

The flat baseline case is checked against independent Monte Carlo simulation of the lifted system, with further schedule--horizon comparisons in \cref{app:pinn-convergence}. The comparison is paired. The network is evaluated at the initial states of the reference ensemble, and the uncertainty is the standard error of the per-state differences, which excludes the initial-state variability common to both estimators. Let $\Delta D_T$ and $\Delta V_T$ denote the differences between an order and its control in the terminal displacement and in the accumulated counterflow volume $\int_0^T q_t^{\mathrm{cf}}\,dt$. By \eqref{eq:displacement-ode}, $\Delta D_T + \Delta V_T/L_0 = Q/L_0$ holds path by path, and it holds exactly in the discretized scheme as well. The residual of this identity in the PINN outputs measures their mutual consistency without additional Monte Carlo sampling. It does not bound the individual output errors: displacement and counterflow errors can cancel in the identity. The independent reference comparison supplies separate accuracy evidence. The reported paired standard errors quantify sampling uncertainty. Time-discretization bias and finite state-domain coverage remain separate limitations of this comparison. At the baseline capacity the counterflow is stiff, and the reference uses a time step of $0.01\,\tau _0$. The PINN domain is built from a pilot of comparable size, so that the reference initial states lie inside it.

The comparison at the baseline of \cref{tab:baseline-parameters} is summarized in \cref{tab:pinn-verification}. The identity residual is $1.19\times10^{-3}$ in absolute units, a single number because both observables enter the same algebraic identity. The no-order control has zero displacement and zero accumulated counterflow on every simulated path. The trained control network reproduces this with maximum absolute errors of $4.0\times10^{-3}$ and $3.9\times10^{-3}$ over the $2048$ reference states, below the training tolerance of $5\times10^{-3}$, so the order-minus-control difference isolates the order's response to within the approximation error of the network.

The paired errors are $-4.6\times10^{-4}$ for $D$ and $+1.0\times10^{-3}$ for $V^{\mathrm{cf}}$, each resolved above the paired sampling error of $8.5\times10^{-5}$ and $8.4\times10^{-5}$ respectively. In relative terms the PINN reproduces the Monte Carlo reference to within $0.29\%$ on $D$ and $0.13\%$ on $V^{\mathrm{cf}}$. Training cost is $132.6$\,s for the order network and $141.4$\,s for the control network, with the reference and pilot Monte Carlo runs a small fraction of that. The comparison records baseline accuracy and computational cost. The same check at additional schedules and horizons is reported in \cref{app:pinn-convergence}.

\begin{table}[!htb]
\centering
\small
\begin{tabular}{lcccc}
\hline
Observable & PINN & MC & Paired error & Relative \\
\hline
$D$              & $0.157808$ & $0.158264$ & $-4.56\times10^{-4}$ $(8.48\times10^{-5})$ & $-0.288\%$ \\
$V^{\mathrm{cf}}$ & $0.842775$ & $0.841736$ & $+1.04\times10^{-3}$ $(8.36\times10^{-5})$ & $+0.124\%$ \\
\hline
\multicolumn{5}{l}{\footnotesize Identity residual (max): $1.19\times10^{-3}$.} \\
\hline
\end{tabular}
\caption{PINN verification at the baseline of \cref{tab:baseline-parameters}. The identity residual is the maximum of $|\Delta D_T + \Delta V_T/L_0 - Q/L_0|$ over the $2048$-state reference ensemble. It is a single number because both observables enter the same algebraic identity. Paired errors are PINN minus the independent Monte Carlo reference, with the standard error of the per-state differences in parentheses. Costs are wall-clock seconds on the same machine for the two-network run at $2000$ Adam epochs and $100$ L-BFGS steps per network.}
\label{tab:pinn-verification}

\end{table}

The network used for the reported run is a fully connected feedforward architecture with $\tanh$ activations, width $64$, and depth $3$, trained with Adam at learning rate $10^{-3}$ for $2000$ epochs per time block, followed by up to $100$ L-BFGS steps when polishing is enabled. Inputs are normalized so that time lies in $[-1,1]$ and each state component lies in $[-1,1]$ within the collocation box. The collocation domain is built from a pilot Monte Carlo cloud expanded by a factor of $1.5$ about its center; half of every batch is drawn from the cloud and half uniformly from the box. For general initial-state averaging, the solver provides Gauss--Hermite quadrature over the Gaussian modes of the finite-start law. The values in \cref{tab:pinn-verification} use the paired reference ensemble.

\subsection{Verification across schedules and horizons} \label{app:pinn-convergence}

The comparison in \cref{subsec:pinn-baseline} concerns the flat schedule at the unit horizon. The additional runs below test two schedule shapes and two horizons and include an analytical Kyle check. The Monte Carlo references use $2048$ paths and a time step of $0.01\tau _0$, each network uses $2000$ Adam epochs followed by $100$ L-BFGS steps. These are checks across configurations at fixed numerical settings, rather than a convergence study under refinement.

\begin{table}[!htb]
\centering
\small
\begin{tabular}{lcccc}
\hline
Schedule & Horizon & $D$ paired error & $V^{\mathrm{cf}}$ paired error & identity residual \\
\hline
Flat  & $1.0\,\tau_0$ & $-4.56\times10^{-4}$ $(8.5\times10^{-5})$ & $+1.04\times10^{-3}$ $(8.4\times10^{-5})$ & $1.19\times10^{-3}$ \\
Flat  & $0.5\,\tau_0$ & $-8.64\times10^{-4}$ $(3.4\times10^{-5})$ & $+9.52\times10^{-4}$ $(3.5\times10^{-5})$ & $9.9\times10^{-4}$ \\
Pause & $1.0\,\tau_0$ & $-2.05\times10^{-3}$ $(9.6\times10^{-5})$ & $+1.76\times10^{-3}$ $(9.6\times10^{-5})$ & $2.9\times10^{-3}$ \\
Pause & $0.5\,\tau_0$ & $-9.24\times10^{-4}$ $(2.7\times10^{-5})$ & $+1.41\times10^{-3}$ $(2.7\times10^{-5})$ & $1.4\times10^{-3}$ \\
\hline
Kyle  & $1.0\,\tau_0$ & \multicolumn{2}{c}{error against analytic $= 1.1\times10^{-4}$} & --- \\
\hline
\end{tabular}
\caption{PINN verification across schedules and horizons at the baseline of
\cref{tab:baseline-parameters}. The paired errors are PINN minus the
independent Monte Carlo reference, with the standard error of the per-state
differences in parentheses. The identity residual is the maximum of
$|\Delta D_T + \Delta V_T/L_0 - Q/L_0|$ over the reference ensemble and is
a single number because both observables enter the same algebraic identity.
The Kyle row is analytic.}
\label{tab:pinn-convergence}

\end{table}

Across the four stochastic settings, the absolute paired errors range from approximately $4.6\times10^{ - 4}$ to $2.1\times10^{ - 3}$ in the reported units. The largest occurs for the interrupted schedule at the longer horizon. The comparison therefore extends the evidence beyond the flat unit-horizon case, while remaining specific to these configurations and the stated numerical settings.

The identity residual is below $3\times10^{ - 3}$ throughout the grid and is largest for that same interrupted case. Schedule switches require matching the backward solutions across time blocks, but the table does not isolate interface error from the other approximation errors. The identity residual measures consistency between displacement and accumulated counterflow. Cancellation between their errors prevents its use as a bound on either output. Accuracy relative to the independent reference is assessed through the paired errors and their sampling uncertainties. Time-discretization bias and finite-domain effects remain separate.

The Kyle check gives an absolute error of $1.1\times10^{ - 4}$ against the analytical response. Because the displacement in this case is deterministic, the reference has no Monte Carlo sampling error. The check tests the solver against a closed form, but does not remove network approximation or finite-domain error.

\end{document}